\journal{Physica D: Nonlinear Phenomena}
\newtheorem{theorem}{Theorem}
\newtheorem{definition}{Definition}
\newtheorem{assumption}{Assumption}
\renewcommand{\theassumption}{(A\arabic{assumption})}
\newtheorem{remark}{Remark}
\newtheorem{lemma}{Lemma}
\DeclareMathOperator{\sech}{sech}
\DeclareMathOperator{\arctanh}{arctanh}
\DeclareMathOperator{\antidiag}{antidiag}
\begin{document}

\begin{frontmatter}

\title{Krein signature for instability of $\mathcal{PT}$-symmetric states}

\author[mcmaster]{Alexander Chernyavsky\fnref{note1}}
\author[mcmaster,nizhny]{Dmitry E. Pelinovsky\fnref{note2}\corref{cor1}}
\address[mcmaster]{\small Department of Mathematics, McMaster University, Hamilton, Ontario, L8S 4K1, Canada}
\address[nizhny]{\small Department of Applied Mathematics, Nizhny Novgorod State Technical University,
Nizhny Novgorod 603950, Russia }
\fntext[note1]{e-mail: chernya@math.mcmaster.ca}
\fntext[note2]{e-mail: dmpeli@math.mcmaster.ca}
\cortext[cor1]{Corresponding author}

\begin{abstract}
{\em Krein quantity} is introduced for isolated neutrally stable eigenvalues associated with the
stationary states in the $\mathcal{PT}$-symmetric nonlinear Schr\"{o}dinger equation. Krein quantity
is real and nonzero for simple eigenvalues but it vanishes if two simple eigenvalues coalesce
into a defective eigenvalue. A necessary condition for bifurcation of unstable eigenvalues
from the defective eigenvalue is proved. This condition requires the two simple eigenvalues
before the coalescence point to have {\em opposite} Krein signatures. The theory is illustrated with
several numerical examples motivated by recent publications in physics literature.
\end{abstract}

\begin{keyword}
    $\mathcal{PT}$-symmetry\sep Krein signature\sep nonlinear Schr\"{o}dinger equation
\end{keyword}

\end{frontmatter}

\section{Introduction}

Dynamical systems are called $\mathcal{PT}$-symmetric if they are invariant with
respect to the combined parity ($\mathcal{P}$) and time-reversal ($\mathcal{T}$) transformations.
A non-Hermitian $\mathcal{PT}$-symmetric linear operator may have a real spectrum
and may define a unitary time evolution of the linear $\mathcal{PT}$-symmetric  system \cite{bender}. A
non-Hamiltonian $\mathcal{PT}$-symmetric nonlinear system may have a continuous family
of stationary states parameterized by their energy \cite{ptreview, suchkov}.

Originated in quantum mechanics~\cite{bender2005,mostafazadeh}, the topic of
$\mathcal{PT}$-symmetry was later boosted by applications in optics~\cite{makris, musslimani}
and other areas of physics~\cite{benderExp,rubinsteinExp,Schindler}.
Recent applications include single-mode $\cal{PT}$ lasers~\cite{feng14,hodaei}
and unidirectional reflectionless $\cal{PT}$-symmetric metamaterials at
optical frequencies~\cite{feng13}.

The non-Hermitian $\mathcal{PT}$-symmetric linear operator may lose real eigenvalues at
the so-called $\cal{PT}$-phase transition point, where two real eigenvalues coalesce and
bifurcate off to the complex plane, creating instability. A stationary state
of the non-Hamiltonian $\mathcal{PT}$-symmetric nonlinear system may
exist beyond the $\mathcal{PT}$-phase transition point but may become spectrally unstable
due to coalescence of purely imaginary eigenvalues and their bifurcation off to the complex plane.
Examples of such instabilities have been identified for many $\mathcal{PT}$-symmetric
linear and nonlinear systems \cite{bender,ptreview, suchkov}.

In Hamiltonian systems, instabilities arising due to coalescence of purely imaginary eigenvalues
can be predicted by computing the {\em Krein signature} for each eigenvalue, which is defined as
the sign of the quadratic part of Hamiltonian restricted to the associated eigenspace of the linearized problem.
When two purely imaginary eigenvalues coalesce,
they bifurcate off to the complex plane only if they
have opposite Krein signatures prior to collision~\cite{kapitulabook}.
The concept of Krein signature was introduced by MacKay~\cite{mackay} in the case of
finite-dimensional Hamiltonian systems, although the idea dates back to the works of Weierstrass~\cite{weierstrass}.

There have been several attempts to extend the concept of Krein signature
to the non-Hamiltonian $\mathcal{PT}$-symmetric systems.
Nixon and Yang~\cite{yang} considered the linear Schr\"{o}dinger equation
with a complex-valued $\mathcal{PT}$-symmetric potential and introduced
the indefinite $\mathcal{PT}$-inner product with the induced $\mathcal{PT}$-Krein signature,
in the exact correspondence with the Hamiltonian-Krein signature.
In our previous works \cite{CP1,CP2}, we considered a Hamiltonian
version of the $\mathcal{PT}$-symmetric system of coupled oscillators
and introduced Krein signature of eigenvalues by using the corresponding Hamiltonian.
In the recent works \cite{AB,PTL,SS}, a coupled non-Hamiltonian $\mathcal{PT}$-symmetric system
was considered and the linearized system was shown to be block-diagonalizable
to the form where Krein signature of eigenvalues can be introduced.
All these cases were too special, the corresponding Krein signatures
cannot be extended to a general $\mathcal{PT}$-symmetric system.

In this work, we address the following nonlinear Schr\"{o}dinger's equation (NLSE) with
a general complex potential:
\begin{equation}
i\partial_t \psi + \partial^2_x \psi - ( V(x) + i\gamma W(x) )\psi + g|\psi|^2 \psi = 0,
\label{NLS}
\end{equation}
where $\gamma\in\mathbb{R}$ is a gain-loss parameter, $g=+1$ ($g=-1$) defines
focusing (defocusing) nonlinearity, and the real potentials $V$ and $W$
satisfy the even and odd symmetry, respectively:
\begin{equation}
\label{potentials}
V(x) = V(-x), \quad W(-x) = -W(x), \quad x \in \mathbb{R}.
\end{equation}
In quantum physics, the complex potential $V + i \gamma W$ is used to
describe effects observed when quantum particles are loaded in an open
system~\cite{wunner,dast}. The intervals with
positive and negative imaginary part correspond to the gain and loss of quantum particles,
respectively. When gain exactly matches loss, which happens under the symmetry
condition (\ref{potentials}), the potential $V + i \gamma W$ is
 $\cal{PT}$-symmetric with respect to the parity operator $\mathcal{P}$ and
the time reversal operator $\cal{T}$ acting on a function $\psi(x,t)$ as follows:
\begin{equation}
\label{operators}
    \mathcal{P} \psi(x,t) = \psi(-x,t), \quad
    \mathcal{T} \psi(x,t) = \overline{\psi(x,-t)}.
\end{equation}
The NLSE (\ref{NLS}) is $\mathcal{PT}$-symmetric under the condition (\ref{potentials})
in the sense that if $\psi(x,t)$ is a solution to (\ref{NLS}), then
\begin{equation*}
\widetilde{\psi}(x,t) = \mathcal{PT} \psi(x,t) = \overline{\psi(-x,-t)}
\end{equation*}
is also a solution to (\ref{NLS}).

The NLSE (\ref{NLS}) with a $\mathcal{PT}$-symmetric potential is also used in the paraxial nonlinear optics.
In that context, time and space have a meaning of longitudinal and transverse coordinates,
and complex potential models the complex refractive index~\cite{delgado}.
Another possible application of the NLSE (\ref{NLS})
is Bose-Einstein condensate, where it models the dynamics of the self-gravitating boson gas trapped in a
confining potential $V$. Intervals, where $W$ is positive and negative,
allow one to compensate atom injection and particle leakage, correspondingly~\cite{wunner}.

Here we deal with the stationary states in the NLSE (\ref{NLS}) and introduce
Krein signature of isolated eigenvalues in the spectrum of their linearization.
We prove that the necessary condition for the onset
of instability of the stationary states from a defective eigenvalue of algebraic multiplicity two
is the {\em opposite} Krein signature of the two simple isolated eigenvalues prior to
their coalescence. Compared to the Hamiltonian system in \cite{CP1}
or the linear Schr\"{o}dinger equation in \cite{yang}, the Krein signature
of eigenvalues cannot be computed from the eigenvectors in the linearized
problem, as the adjoint eigenvectors need to be computed
separately and the sign of the adjoint eigenvector needs to be chosen by a continuity argument.

We show how to compute Krein signature numerically for several examples of
the $\cal{PT}$-symmetric potentials. In the focusing case $g = 1$, we consider the Scarf II
potential studied in~\cite{ahmed,wadati,Kev,yang} with
\begin{equation}
\label{pot-Wadati}
V(x) = -V_0 \sech^2(x), \quad
W(x) = \sech(x)\tanh(x),
\end{equation}
where $V_0 > 0$ is a parameter. This potential is a complexification of the real hyperbolic 
Scarf potential~\cite{Bagchi}. The nonhyperbolic version of the latter first appeared in~\cite{Scarf}, 
where the linear Schr\"{o}dinger equation with Scarf potential was solved.
In the defocusing case $g = -1$,
we consider the confining potential studied in
\cite{kevrekidis} with
\begin{equation}
\label{pot-BEC}
V(x) = \Omega^2 x^2, \quad W(x) = xe^{-\frac{x^2}{2}},
\end{equation}
where $\Omega > 0$ is a parameter. In agreement with the theory,
we show that the coalescence of two isolated eigenvalues
in the linearized problem associated with the stationary states in the NLSE (\ref{NLS})
leads to instability only if the Krein signatures of the
two eigenvalues are opposite to each other.

The paper is organized as follows. Section \ref{sec-theory} introduces the stationary states,
eigenvalues of the linearization, and the Krein signature of eigenvalues for the NLSE (\ref{NLS})
under some mild assumptions. Section \ref{sec-proof} gives the proof of the necessary condition
for the instability bifurcation from a defective eigenvalue of algebraic multiplicity two.
Section \ref{sec-numerics} explains details of the numerical technique.
Section \ref{sec-examples} presents outcomes of numerical approximations
for the two potentials (\ref{pot-Wadati}) and (\ref{pot-BEC}). Section \ref{sec-conclusion}
concludes the paper with open questions.

\vspace{0.25cm}

{\bf Acknowledgements.} We thank P.G. Kevrekidis for suggesting the problem back in 2012 and for useful discussions.
A. Chernyavsky is supported by the McMaster graduate scholarship. D.E. Pelinovsky is supported from
the state task of Russian Federation in the sphere of scientific activity (Task No. 5.5176.2017/8.9).

\section{Stationary states, eigenvalues, and Krein signature}
\label{sec-theory}

Let us define the stationary state of the NLSE (\ref{NLS}) by $\psi(x,t) = \Phi(x) e^{-i \mu t}$, where $\mu \in \mathbb{R}$
is a parameter. In the context of BECs, $\mu$ has the meaning of the chemical potential~\cite{dast}.
The function $\Phi(x) : \mathbb{R} \to \mathbb{C}$ is
a suitable solution of the stationary NLSE in the form
\begin{equation}
-\Phi''(x) + ( V(x) + i\gamma W(x) ) \Phi(x) - g |\Phi(x)|^2 \Phi(x) = \mu \Phi(x), \quad x \in \mathbb{R}.
\label{NLSstat}
\end{equation}
We say that $\Phi$ is a $\mathcal{PT}$-symmetric stationary state if $\Phi$ satisfies the $\mathcal{PT}$ symmetry:
\begin{equation}
\label{PT-sym}
\Phi(x) = \mathcal{PT} \Phi(x) = \overline{\Phi(-x)}, \quad x \in \mathbb{R}.
\end{equation}
In addition to the symmetry constraints on the potentials $V$ and $W$ in (\ref{potentials}),
our basic assumptions are given below. Here and in what follows, we denote the Sobolev space of square integrable functions
with square integrable second derivatives by $H^2(\mathbb{R})$ and the weighted
$L^2$ space with a finite second moment by $L^{2,2}(\mathbb{R})$.

\begin{assumption}
\label{assumption-1}
We assume that the linear Schr\"{o}dinger operator $L_0 := -\partial_x^2 + V(x)$ in $L^2(\mathbb{R})$
admits a self-adjoint extension with a dense domain $D(L_0)$ in $L^2(\mathbb{R})$.
\end{assumption}

\begin{remark}
If $V \in L^2(\mathbb{R}) \cap L^{\infty}(\mathbb{R})$ as in (\ref{pot-Wadati}),
then Assumption~\ref{assumption-1} is satisfied with $D(L_0) = H^2(\mathbb{R})$
(see \cite{sigal}, Ch. 14, p.143).
If $V$ is harmonic as in (\ref{pot-BEC}),
then Assumption~\ref{assumption-1} is satisfied with
$D(L_0) = H^2(\mathbb{R}) \cap L^{2,2}(\mathbb{R})$ (see ~\cite{heffler}, Ch. 4, p.37).
\end{remark}

\begin{assumption}
\label{assumption-2}
We assume that $W$ is a bounded and exponentially decaying potential satisfying
\begin{equation*}
|W(x)| \leq C e^{-\kappa |x|}, \quad x \in \mathbb{R},
\end{equation*}
for some $C > 0$ and $\kappa > 0$.
\end{assumption}

\begin{remark}
Both examples in (\ref{pot-Wadati}) and (\ref{pot-BEC}) satisfy Assumption~\ref{assumption-2}.
By Assumption~\ref{assumption-2}, the potential $i \gamma W$ is
a relatively compact perturbation to $L_0$ (see \cite{reed4}, Ch. XIII, p.113).
This implies that the continuous spectrum of $L_0 + i\gamma W$ is the same as $L_0$.
If $V \in L^2(\mathbb{R}) \cap L^{\infty}(\mathbb{R})$, then the continuous spectrum of $L_0$
is located on the positive real line. If $V$ is harmonic, then the continuous spectrum of $L_0$
is empty (see~\cite{reed4}, Ch. XIII, Theorem 16 on p.120).
\end{remark}

\begin{assumption}
\label{assumption-3}
We assume that for a given $\mu \in \mathbb{R}$, there exist $\gamma_* > 0$ and
a bounded, decaying, and $\mathcal{PT}$-symmetric solution $\Phi \in D(L_0) \subset L^2(\mathbb{R})$
to the stationary NLSE (\ref{NLSstat}) with $\gamma \in (-\gamma_*,\gamma_*)$ satisfying (\ref{PT-sym}) and
\begin{equation*}
|\Phi(x)| \leq C e^{-\kappa |x|}, \quad x \in \mathbb{R},
\end{equation*}
for some $C > 0$ and $\kappa > 0$. Moreover, the map $(-\gamma_*,\gamma_*) \ni \gamma \mapsto \Phi \in D(L_0)$
is real-analytic.
\end{assumption}

\begin{remark}
Since the nonlinear equation~\eqref{NLSstat} is real-analytic in $\gamma$, the Implicit Function Theorem
(see~\cite{Zeidler}, Ch. 4, Theorem 4.E on p.250) provides real analyticity of the map
 $(-\gamma_*,\gamma_*) \ni \gamma \mapsto \Phi \in D(L_0)$ as long as the Jacobian operator
\begin{equation}
\label{jacobian}
\mathcal{L} := \left[\begin{array}{cc}
      -\partial_x^2 + V + i \gamma W - \mu - 2g |\Phi|^2 & -g\Phi^2 \\
      -g\overline\Phi^2 & -\partial_x^2 + V - i \gamma W - \mu - 2g |\Phi|^2
\end{array}\right]
\end{equation}
is invertible in the space of $\mathcal{PT}$-symmetric functions in $L^2(\mathbb{R})$.
\end{remark}

\begin{remark}
Under Assumption~\ref{assumption-3}, we think about $\mu$ as a fixed parameter and $\gamma$ as a varying parameter
in the interval $(-\gamma_*,\gamma_*)$. The interval includes the Hamiltonian case $\gamma = 0$.
In the context of the example of $V$ in (\ref{pot-Wadati}), it will be more natural to fix the value of $\gamma$
and to consider the parameter continuation of $\Phi \in D(L_0)$ with respect to $\mu$.
The results are analogous to what we present here under Assumption \ref{assumption-3}.
\end{remark}

We perform the standard linearization of the NLSE (\ref{NLS})
near the stationary state $\Phi$ by substituting
\begin{equation*}
\psi(x,t) = e^{-i\mu t} \left[ \Phi(x) + u(t,x) \right]
\end{equation*}
into the NLSE ~\eqref{NLS} and truncating at the linear terms in $u$:
\begin{equation*}
\begin{cases}
i u_t = (-\partial^2_x + V + i\gamma W - \mu - 2g|\Phi|^2)u - g\Phi^2 \overline{u}, \\
-i \overline{u}_t = (-\partial^2_x + V - i\gamma W - \mu - 2g|\Phi|^2)\overline{u} - g\overline\Phi^2 u.
\end{cases}
\end{equation*}

Using $u = Ye^{-\lambda t}$ and $\overline{u} = Ze^{-\lambda t}$ with the spectral parameter $\lambda$
yields the spectral stability problem in the form
\small
\begin{equation}
        \arraycolsep=0pt\def\arraystretch{1}
\mathcal{L} \left[\begin{array}{c} Y \\ Z \end{array}\right] = -i\lambda\sigma_3
\left[\begin{array}{c} Y \\ Z \end{array}\right],
\label{original}
\end{equation}
\normalsize
where $\sigma_3 = {\rm diag}(1,-1)$ is the third Pauli's matrix and $\mathcal{L}$
is given by (\ref{jacobian}). Note that if $\lambda\not\in\mathbb{R}$, then $Z \ne \overline{Y}$.

\begin{lemma}
\label{lemma-continuous-spectrum}
The continuous spectrum of the operator $i \sigma_3 \mathcal{L} : D(L_0) \times D(L_0) \to L^2(\mathbb{R}) \times L^2(\mathbb{R})$,
if it exists, is a subset of $i \mathbb{R}$.
\end{lemma}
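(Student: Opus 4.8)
The plan is to isolate the ``asymptotic'' part of $i\sigma_3\mathcal{L}$, whose spectrum visibly lies on $i\mathbb{R}$, and to absorb the remaining terms into a relatively compact perturbation that cannot produce continuous spectrum off the imaginary axis. Write $i\sigma_3\mathcal{L} = i\sigma_3\mathcal{L}_0 + K$, where $\mathcal{L}_0 := \diag(L_0 - \mu,\, L_0 - \mu)$ with $L_0 = -\partial_x^2 + V$, and $K$ collects all the remaining entries, namely the bounded multiplication operators built from $i\gamma W$, $|\Phi|^2$, $\Phi^2$, $\overline\Phi^2$ (each times a constant from $i\sigma_3$). A direct computation gives
\begin{equation*}
i\sigma_3\mathcal{L}_0 = \left[\begin{array}{cc} i(L_0-\mu) & 0 \\ 0 & -i(L_0-\mu) \end{array}\right],
\end{equation*}
a block-diagonal operator on $D(L_0)\times D(L_0)$. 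By Assumption~\ref{assumption-1}, $L_0$ is self-adjoint, so $\sigma(L_0)\subset\mathbb{R}$ and hence $\sigma(i\sigma_3\mathcal{L}_0) = i(\sigma(L_0)-\mu)\cup\bigl(-i(\sigma(L_0)-\mu)\bigr)\subset i\mathbb{R}$; in the harmonic case this set is discrete, in the $V\in L^2\cap L^\infty$ case it is a pair of half-lines on $i\mathbb{R}$.

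Next I would verify that $K$ is relatively compact with respect to $\mathcal{L}_0$, equivalently with respect to $i\sigma_3\mathcal{L}_0$ since $i\sigma_3$ is a bounded bijection. Each entry of $K$ is a bounded and exponentially decaying multiplication operator: the part involving $\gamma W$ by Assumption~\ref{assumption-2}, and the parts involving $|\Phi|^2$, $\Phi^2$, $\overline\Phi^2$ by the exponential decay of $\Phi$ in Assumption~\ref{assumption-3}. Multiplication by a bounded decaying function is relatively compact with respect to $L_0$ — cut it into a piece supported on $[-R,R]$, compact on $D(L_0)$ by the Rellich embedding, plus a remainder whose operator norm is $o(1)$ as $R\to\infty$ (this is the same argument already invoked for $i\gamma W$ via \cite{reed4}); in the harmonic case $(L_0-z)^{-1}$ is itself compact, so the conclusion is immediate. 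In particular $K(i\sigma_3\mathcal{L}_0-z)^{-1}$ is compact for $z$ in the resolvent set of $i\sigma_3\mathcal{L}_0$, and $i\sigma_3\mathcal{L}$ is closed on $D(L_0)\times D(L_0)$, as asserted.

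The conclusion then follows from an analytic Fredholm argument. For $z\in\mathbb{C}\setminus i\mathbb{R}$ we have $z\notin\sigma(i\sigma_3\mathcal{L}_0)$, so
\begin{equation*}
i\sigma_3\mathcal{L} - z = (i\sigma_3\mathcal{L}_0 - z)\bigl(I + (i\sigma_3\mathcal{L}_0 - z)^{-1}K\bigr)
\end{equation*}
is Fredholm of index zero, since $(i\sigma_3\mathcal{L}_0-z)^{-1}K$ is compact. Because $K$ is bounded and $\|(i\sigma_3\mathcal{L}_0-z)^{-1}\|\to 0$ as $|\re z|\to\infty$, the factor $I + (i\sigma_3\mathcal{L}_0-z)^{-1}K$ is invertible for $|\re z|$ large; by the analytic Fredholm theorem it is then invertible on each half-plane $\{\re z > 0\}$ and $\{\re z < 0\}$ except on a discrete set. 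Hence $\sigma(i\sigma_3\mathcal{L})\setminus i\mathbb{R}$ consists only of isolated eigenvalues of finite algebraic multiplicity, so the continuous spectrum, if it is nonempty, is contained in $i\mathbb{R}$. The step that needs the most care is establishing the relative compactness of the off-diagonal terms $g\Phi^2$, $g\overline\Phi^2$ and of $i\gamma W$ at once, together with the resolvent decay that rules out the degenerate branch of the analytic Fredholm alternative; both are routine consequences of Assumptions~\ref{assumption-2}–\ref{assumption-3}.
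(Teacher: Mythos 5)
Your proposal is correct and follows essentially the same route as the paper: the same splitting $i\sigma_3\mathcal{L} = i\sigma_3\mathcal{L}_0 + K$ with $\mathcal{L}_0 = \diag(L_0-\mu, L_0-\mu)$ and the observation that the $W$ and $\Phi$ terms are relatively compact, so that the continuous spectrum coincides with that of $i\sigma_3\mathcal{L}_0 \subset i\mathbb{R}$. The only difference is that where the paper simply cites the invariance of the continuous spectrum under relatively compact perturbations, you supply its standard justification via the Fredholm factorization and the analytic Fredholm theorem, which is a welcome elaboration rather than a different argument.
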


\begin{proof}
Thanks to the Assumptions~\ref{assumption-1},~\ref{assumption-2} and~\ref{assumption-3},
$W$ and $\Phi^2$ terms are relatively compact perturbations to the diagonal unbounded
operator $\mathcal{L}_0 := {\rm diag}(L_0-\mu I,L_0- \mu I)$, where
$L_0 = -\partial^2_x + V$ is introduced in (A1) and $I$ is an identity
$2\times 2$ matrix. Therefore,
\begin{equation*}
\sigma_c(i \sigma_3 \mathcal{L}) = \sigma_c(i \sigma_3 \mathcal{L}_0) \subset i \mathbb{R},
\end{equation*}
where $\sigma_c(A)$ denotes the absolutely continuous part of the spectrum of the operator
$A : D(A) \subset L^2(\mathbb{R}) \to L^2(\mathbb{R})$.
\end{proof}

\begin{remark}
If $V \in L^2(\mathbb{R}) \cap L^{\infty}(\mathbb{R})$, then $\mu < 0$ and
\begin{equation*}
\sigma_c(i \sigma_3 \mathcal{L}) = i (-\infty,-|\mu|] \cup i[|\mu|,\infty).
\end{equation*}
If $V$ is harmonic, then $\sigma_c(i \sigma_3 \mathcal{L})$ is empty.
\end{remark}

\begin{definition}
We say that the stationary state $\Phi$ is spectrally stable if every nonzero solution
$(Y,Z) \in D(L_0) \times D(L_0)$ to the spectral problem (\ref{original})
corresponds to $\lambda \in i \mathbb{R}$.
\end{definition}

We note the quadruple symmetry of eigenvalues in the spectral problem (\ref{original}).

\begin{lemma}
\label{lem-symmetry}
If $\lambda_0$ is an eigenvalue of the spectral problem ~\eqref{original},
so are $-\lambda_0$, $\bar\lambda_0$, and $-\bar\lambda_0$.
\end{lemma}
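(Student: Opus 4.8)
The plan is to realize the quadruple $\{\pm\lambda_0,\pm\overline{\lambda_0}\}$ as the orbit of $\lambda_0$ under two symmetries of the spectral problem (\ref{original}): an antilinear one coming from complex conjugation, which sends $\lambda_0\mapsto\overline{\lambda_0}$, and a linear one built from the parity operator together with the $\mathcal{PT}$ symmetry of $\Phi$, which sends $\lambda_0\mapsto-\lambda_0$. Throughout I would write $\sigma_1=\antidiag(1,1)$ and use the single relation $\sigma_1\sigma_3=-\sigma_3\sigma_1$.

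For the first symmetry I would observe that, since $V$, $W$ and $\mu$ are real and $\overline{\Phi^2}=\overline{\Phi}^2$, $\overline{|\Phi|^2}=|\Phi|^2$, the differential expression $\mathcal{L}$ in (\ref{jacobian}) satisfies $\mathcal{L}=\sigma_1\overline{\mathcal{L}}\sigma_1$, where $\overline{\mathcal{L}}$ is the entrywise conjugated operator; equivalently, the antilinear map $C\colon(Y,Z)^{T}\mapsto(\overline{Z},\overline{Y})^{T}$ commutes with $\mathcal{L}$ and anticommutes with $\sigma_3$. Applying $C$ to $\mathcal{L}(Y,Z)^{T}=-i\lambda_0\sigma_3(Y,Z)^{T}$ and using antilinearity of $C$ then gives $\mathcal{L}\,C(Y,Z)^{T}=-i\overline{\lambda_0}\,\sigma_3\,C(Y,Z)^{T}$, so $\overline{\lambda_0}$ is an eigenvalue. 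Here I would note that $D(L_0)$ is invariant under complex conjugation, so $C$ maps $D(L_0)\times D(L_0)$ into itself and $C(Y,Z)^{T}$ is again a genuine nonzero eigenfunction.

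For the second symmetry I would set $\mathcal{P}\colon(Y,Z)^{T}(x)\mapsto(Y(-x),Z(-x))^{T}$. The even/odd symmetries (\ref{potentials}) of $V,W$ together with the $\mathcal{PT}$ constraint (\ref{PT-sym}), which gives $\Phi(-x)=\overline{\Phi(x)}$ and hence $|\Phi(-\cdot)|^2=|\Phi|^2$ and $\Phi(-\cdot)^2=\overline{\Phi}^2$, yield the operator identity $\mathcal{P}\mathcal{L}\mathcal{P}=\sigma_1\mathcal{L}\sigma_1$; thus $S:=\sigma_1\mathcal{P}$ commutes with $\mathcal{L}$, and it anticommutes with $\sigma_3$ because $\mathcal{P}$ commutes with $\sigma_3$ while $\sigma_1$ anticommutes with it. Applying $S$ to the eigenvalue equation gives $\mathcal{L}\,S(Y,Z)^{T}=i\lambda_0\,\sigma_3\,S(Y,Z)^{T}=-i(-\lambda_0)\sigma_3\,S(Y,Z)^{T}$, so $-\lambda_0$ is an eigenvalue; $S$ preserves $D(L_0)\times D(L_0)$ since $D(L_0)$ is invariant under $x\mapsto-x$.

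Finally, applying either symmetry to the eigenfunction produced by the other (equivalently, using $S\circ C$) produces an eigenfunction for $-\overline{\lambda_0}$, so all of $-\lambda_0$, $\overline{\lambda_0}$ and $-\overline{\lambda_0}$ are eigenvalues of (\ref{original}). I do not expect a genuine obstacle; the only points needing care are verifying the two operator identities $\mathcal{L}=\sigma_1\overline{\mathcal{L}}\sigma_1$ and $\mathcal{P}\mathcal{L}\mathcal{P}=\sigma_1\mathcal{L}\sigma_1$ at the level of the differential expressions, together with the domain invariance under $C$ and $S$, after which the conclusion is a one-line manipulation resting on $\sigma_1\sigma_3=-\sigma_3\sigma_1$.
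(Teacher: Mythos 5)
Your proof is correct and follows essentially the same route as the paper: the antilinear symmetry $\mathcal{L}=\sigma_1\overline{\mathcal{L}}\sigma_1$ giving $\overline{\lambda}_0$, and the parity/$\sigma_1$ structure giving the remaining eigenvalues, with $\sigma_1\sigma_3=-\sigma_3\sigma_1$ doing the sign work. The only (immaterial) difference is organizational: the paper verifies the second symmetry in its antilinear form $\mathcal{L}=\mathcal{P}\overline{\mathcal{L}}\mathcal{P}$ (producing $-\overline{\lambda}_0$ directly and $-\lambda_0$ by composition), whereas you verify the equivalent linear identity $\mathcal{P}\mathcal{L}\mathcal{P}=\sigma_1\mathcal{L}\sigma_1$ (producing $-\lambda_0$ directly and $-\overline{\lambda}_0$ by composition), and both yield the same eigenvectors.
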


\begin{proof}
We note the symmetry of $\mathcal{L}$ and $\sigma_3$:
\begin{equation}
\label{symm1}
\mathcal{L} = \sigma_1 \overline{\mathcal{L}} \sigma_1, \quad
\sigma_3 = -\sigma_1 \sigma_3 \sigma_1,
\end{equation}
where $\sigma_1 = \antidiag(1,1)$ is the first Pauli's matrix.
If $\lambda_0$ is an eigenvalue of the spectral problem~\eqref{original} with the eigenvector
$v_0 := (Y,Z)$, then so is $\overline{\lambda}_0$ with the eigenvector
$\sigma_1 \overline{v}_0 = (\overline{Z},\overline{Y})$.
We note the second symmetry of $\mathcal{L}$ and $\sigma_3$:
\begin{equation}
\label{symm2}
\mathcal{L} = \mathcal{P} \overline{\mathcal{L}} \mathcal{P}, \quad
\sigma_3 = \mathcal{P} \sigma_3 \mathcal{P},
\end{equation}
where $\mathcal{P}$ is the parity transformation given by (\ref{operators}).
If $\lambda_0$ is an eigenvalue of the spectral problem~\eqref{original} with the eigenvector
$v_0 := (Y,Z)$, then so is $-\overline{\lambda}_0$ with the eigenvector
$\mathcal{P} \mathcal{T} v_0(x) = (\overline{Y(-x)},\overline{Z(-x)})$.
As a consequence of the two symmetries (\ref{symm1}) and (\ref{symm2}),
$-\lambda_0$ is also an eigenvalue with the eigenvector
$\mathcal{P} \sigma_1 v_0(x) = (Z(-x),Y(-x))$.
\end{proof}

Besides the spectral problem (\ref{original}), we also introduce the adjoint
spectral problem with the adjoint eigenvector denoted by $(Y^\#,Z^\#)$:
\small
\begin{equation}
        \arraycolsep=-1.7pt\def\arraystretch{1}
\mathcal{L}^* \left[\begin{array}{c} Y^\# \\ Z^\# \end{array}\right] = -i\lambda \sigma_3
\left[\begin{array}{c} Y^\# \\ Z^\# \end{array}\right],
\label{Adj}
\end{equation}
where
\begin{equation*}
\mathcal{L}^* := \left[\begin{array}{cc}
      -\partial_x^2 + V - i \gamma W - \mu - 2 g |\Phi|^2 & - g \Phi^2 \\
      - g \overline\Phi^2 & -\partial_x^2 + V + i \gamma W - \mu - 2 g |\Phi|^2
\end{array}\right].
\end{equation*}
\normalsize

\begin{remark}
Unless $\gamma = 0$ or $\Phi = 0$, the adjoint eigenvector $(Y^\#,Z^\#)$ cannot be
related to the eigenvector $(Y,Z)$ for the same eigenvalue $\lambda$.
\end{remark}

Our next assumption is on the existence of a nonzero isolated eigenvalue of the spectral problem (\ref{original}).

\begin{assumption}
\label{assumption-4}
We assume that there exists a simple isolated eigenvalue $\lambda_0 \in \mathbb{C} \backslash \{0\}$
of the spectral problems (\ref{original}) and (\ref{Adj})
with the eigenvector $v_0 := (Y,Z) \in D(L_0) \times D(L_0)$
and the adjoint eigenvector $v_0^\# := (Y^\#,Z^\#) \in D(L_0) \times D(L_0)$, respectively.
\end{assumption}

\begin{lemma}
\label{lem-PT-symmetry}
Under Assumption \ref{assumption-4}, if $\lambda_0\in i\mathbb{R}$, then
the corresponding eigenvectors $v_0 := (Y,Z)$ and $v_0^\# := (Y^\#,Z^\#)$
can be normalized to satisfy
\begin{equation}
\label{PT-sym-eigenvector}
Y(x) = \overline{Y(-x)}, \quad Z(x) = \overline{Z(-x)}, \quad  x \in \mathbb{R}
\end{equation}
and
\begin{equation}
\label{PT-sym-eigenvector-adjoint}
Y^\#(x) = \overline{Y^\#(-x)}, \quad Z^\#(x) = \overline{Z^\#(-x)}, \quad x \in \mathbb{R}.
\end{equation}
\end{lemma}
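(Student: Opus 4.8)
The plan is to use the antilinear involution $\mathcal{PT}$, acting on a spatial vector $v = (Y,Z)$ by $(\mathcal{PT}v)(x) = (\overline{Y(-x)},\overline{Z(-x)})$ exactly as in the proof of Lemma~\ref{lem-symmetry}, and to observe that when $\lambda_0 \in i\mathbb{R}$ this map sends the eigenspaces of~(\ref{original}) and of~(\ref{Adj}) at $\lambda_0$ into themselves; rotating the phase of the eigenvectors then makes them fixed points of $\mathcal{PT}$, which is precisely~(\ref{PT-sym-eigenvector}) and~(\ref{PT-sym-eigenvector-adjoint}).

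First I would record that the adjoint operator obeys the same symmetry as $\mathcal{L}$ in~(\ref{symm2}): $\mathcal{L}^* = \mathcal{P}\,\overline{\mathcal{L}^*}\,\mathcal{P}$ together with $\sigma_3 = \mathcal{P}\sigma_3\mathcal{P}$. This is a short entrywise check. The diagonal part $-\partial_x^2 + V \mp i\gamma W - \mu - 2g|\Phi|^2$ splits into the even real operator $-\partial_x^2 + V - \mu - 2g|\Phi|^2$ — even because $V$ is even and, by the $\mathcal{PT}$-symmetry~(\ref{PT-sym}) of $\Phi$, $|\Phi(-x)| = |\Phi(x)|$ — plus $\mp i\gamma W$, which is invariant under the combined effect of complex conjugation (it flips the sign of $i$) and parity (it flips the sign of $W$, since $W$ is odd). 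For the off-diagonal entries, (\ref{PT-sym}) gives $\Phi(-x)^2 = \overline{\Phi(x)^2}$, so the same combined operation interchanges $-g\Phi^2$ with $-g\overline{\Phi}^2$ in agreement with $\mathcal{L}^*$; a direct comparison yields $\mathcal{P}\,\overline{\mathcal{L}^*}\,\mathcal{P} = \mathcal{L}^*$, and the identical computation reproduces~(\ref{symm2}). I expect this verification, minor though it is, to carry the real content of the lemma, since it is the only place where the symmetry hypothesis~(\ref{PT-sym}) on the stationary state enters.

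Next, conjugating $\mathcal{L}v_0 = -i\lambda_0\sigma_3 v_0$, applying $\mathcal{P}$, and inserting the identities above, I get $\mathcal{L}(\mathcal{PT}v_0) = -i(-\overline{\lambda_0})\sigma_3(\mathcal{PT}v_0)$, and similarly $\mathcal{L}^*(\mathcal{PT}v_0^\#) = -i(-\overline{\lambda_0})\sigma_3(\mathcal{PT}v_0^\#)$; moreover $\mathcal{PT}$ maps $D(L_0)\times D(L_0)$ onto itself since the domains in the Remark after Assumption~\ref{assumption-1} are invariant under $x\mapsto -x$ and conjugation. When $\lambda_0 \in i\mathbb{R}$ one has $-\overline{\lambda_0} = \lambda_0$, so $\mathcal{PT}v_0$ and $\mathcal{PT}v_0^\#$ are again eigenvectors of~(\ref{original}) and~(\ref{Adj}) at $\lambda_0$, and they are nonzero because $\mathcal{PT}$ is invertible. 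By the simplicity assumed in Assumption~\ref{assumption-4} the eigenspaces are one-dimensional, hence $\mathcal{PT}v_0 = c\,v_0$ and $\mathcal{PT}v_0^\# = c^\#\,v_0^\#$ for some $c, c^\# \in \mathbb{C}$.

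Finally, applying $\mathcal{PT}$ once more and using $(\mathcal{PT})^2 = \mathrm{id}$ with antilinearity, $v_0 = \overline{c}\,\mathcal{PT}v_0 = |c|^2 v_0$, so $|c| = 1$; write $c = e^{i\theta}$ and likewise $c^\# = e^{i\theta^\#}$. Replacing $v_0$ by $e^{i\theta/2}v_0$ and $v_0^\#$ by $e^{i\theta^\#/2}v_0^\#$ — admissible because a simple eigenvector is determined only up to a scalar — converts the relations into $\mathcal{PT}v_0 = v_0$ and $\mathcal{PT}v_0^\# = v_0^\#$. Written out in components, these are exactly~(\ref{PT-sym-eigenvector}) and~(\ref{PT-sym-eigenvector-adjoint}). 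The only steps that might conceal a subtlety are the entrywise symmetry check of the adjoint operator and the bookkeeping of the eigenvalue image $-\overline{\lambda_0}$, both of which become routine once the role of~(\ref{PT-sym}) is made explicit.
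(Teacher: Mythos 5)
Your proposal is correct and follows essentially the same route as the paper's proof: the symmetry of Lemma~\ref{lem-symmetry} shows that $\mathcal{PT}v_0$ (and $\mathcal{PT}v_0^\#$) is again an eigenvector at $-\overline{\lambda}_0=\lambda_0$, simplicity forces proportionality with a unimodular constant, and a phase rotation yields (\ref{PT-sym-eigenvector}) and (\ref{PT-sym-eigenvector-adjoint}). The only differences are cosmetic: you verify the adjoint symmetry $\mathcal{L}^*=\mathcal{P}\overline{\mathcal{L}^*}\mathcal{P}$ explicitly (the paper just says ``the same argument applies'') and deduce $|c|=1$ from $(\mathcal{PT})^2=\mathrm{id}$ rather than by taking norms.
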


\begin{proof}
By Lemma \ref{lem-symmetry}, if $\lambda_0 \in i\mathbb{R}$ is a nonzero
eigenvalue with the eigenvector $v_0 := (Y,Z)$,
so is $-\overline{\lambda}_0 = \lambda_0$ with the eigenvector $\mathcal{P T} v_0$.
Since $\lambda_0$ is a simple eigenvalue, there is a constant $C \in \mathbb{C}$ such that
$v_0 = C \mathcal{P T} v_0$. Taking norms on both sides, we have $|C|=1$.
Therefore $C = e^{i\alpha}$ for some $\alpha\in [0,2\pi]$, and $\alpha$
can be chosen so that $v_0$ satisfy $v_0 = \mathcal{P T} v_0$ as in (\ref{PT-sym-eigenvector}).
The same argument applies to the adjoint eigenvector $v^\#_0 := (Y^\#,Z^\#)$.
\end{proof}

We shall now introduce the main object of our study, the Krein signature of the simple nonzero
isolated eigenvalue $\lambda_0$ in Assumption~\ref{assumption-4}.

\begin{definition}
\label{def-Krein}
The Krein signature of the eigenvalue $\lambda_0$ in Assumption \ref{assumption-4}
is the sign of the Krein quantity $K(\lambda_0)$ defined by
\begin{equation}
K(\lambda_0) = \langle v_0, \sigma_3 v_0^\# \rangle
    = \int_{\mathbb{R}} \left[ Y(x) \overline{Y^\#(x)} - Z(x)\overline{Z^\#(x)} \right] dx.
\label{pt-krein}
\end{equation}
\end{definition}

The following lemma states the main properties of the Krein quantity $K(\lambda_0)$.

\begin{lemma}
Assume (A4) and define $K(\lambda_0)$ by (\ref{pt-krein}). Then,
\begin{enumerate}
 \item $K(\lambda_0)$ is real if $\lambda_0 \in i\mathbb{R}  \backslash \{0\}$.
 \item $K(\lambda_0) \ne 0$ if $\lambda_0 \in i\mathbb{R}  \backslash \{0\}$.
 \item $K(\lambda_0) = 0$ if $\lambda_0 \in \mathbb{C}\backslash\{i\mathbb{R}\}$.
\end{enumerate}
\label{lem-krein}
\end{lemma}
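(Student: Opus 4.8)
The plan is to dispatch items~1 and~3, which are short, and then spend the effort on item~2. For item~1, assume $\lambda_0 \in i\mathbb{R}\setminus\{0\}$ and invoke Lemma~\ref{lem-PT-symmetry} to normalize $v_0 = (Y,Z)$ and $v_0^\# = (Y^\#,Z^\#)$ so that every component is $\mathcal{PT}$-symmetric, i.e. $\overline{Y(-x)} = Y(x)$ and likewise for $Z$, $Y^\#$, $Z^\#$. Taking the complex conjugate of the integral in~(\ref{pt-krein}) and then substituting $x \mapsto -x$, these relations turn $\overline{K(\lambda_0)}$ back into $K(\lambda_0)$, so $K(\lambda_0)\in\mathbb{R}$; this is a one-line change-of-variables computation.

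For item~3, I would pair the spectral problem~(\ref{original}) against $v_0^\#$ using the identity $\langle \mathcal{L} v_0, v_0^\# \rangle = \langle v_0, \mathcal{L}^* v_0^\# \rangle$ (a direct check on the explicit entries shows $\mathcal{L}^*$ is the $L^2(\mathbb{R}) \times L^2(\mathbb{R})$ adjoint of $\mathcal{L}$), together with $\sigma_3^* = \sigma_3$. Substituting $\mathcal{L} v_0 = -i\lambda_0\sigma_3 v_0$ on the left and $\mathcal{L}^* v_0^\# = -i\lambda_0\sigma_3 v_0^\#$ on the right yields $-i\lambda_0 K(\lambda_0) = i\overline{\lambda_0}\,K(\lambda_0)$, hence $2\re(\lambda_0)\,K(\lambda_0) = 0$. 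If $\lambda_0\notin i\mathbb{R}$ then $\re(\lambda_0)\neq 0$, forcing $K(\lambda_0) = 0$. This is again purely computational and uses no $\mathcal{PT}$-symmetry.

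Item~2 is the substantive part. Set $A := i\sigma_3\mathcal{L}$, so that~(\ref{original}) reads $A v_0 = \lambda_0 v_0$, and by Assumption~\ref{assumption-4} with Lemma~\ref{lemma-continuous-spectrum} the eigenvalue $\lambda_0$ is simple and isolated from the rest of $\sigma(A)$. A short computation using $\sigma_3^2 = I$ and $A^* = -i\mathcal{L}^*\sigma_3$ shows that $w := \sigma_3 v_0^\#$ solves $A^* w = -\lambda_0 w$, and since $\lambda_0 \in i\mathbb{R}$ we have $-\lambda_0 = \overline{\lambda_0}$, so $w$ spans the one-dimensional eigenspace of $A^*$ at $\overline{\lambda_0}$. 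The crucial ingredient is then the standard fact that the right eigenvector $v_0$ and the left eigenvector $w$ of a simple isolated eigenvalue cannot be orthogonal: $A - \lambda_0 I$ is Fredholm of index zero with closed range $\ran(A - \lambda_0 I) = \mathrm{span}(w)^{\perp}$, so if $\langle v_0, w\rangle = 0$ then $v_0 \in \ran(A - \lambda_0 I)$ and the chain equation $(A - \lambda_0 I) v_1 = v_0$ is solvable; the resulting $v_1$ satisfies $(A - \lambda_0 I)^2 v_1 = 0$ with $(A - \lambda_0 I) v_1 \neq 0$, contradicting simplicity of $\lambda_0$. Since $K(\lambda_0) = \langle v_0, \sigma_3 v_0^\# \rangle = \langle v_0, w\rangle$, this gives $K(\lambda_0) \neq 0$.

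The step I expect to be the main obstacle is exactly this last one: one must carefully convert the adjoint problem~(\ref{Adj}) into an honest spectral problem for $A^* = (i\sigma_3\mathcal{L})^*$ — the bookkeeping with $\sigma_3$ and the identification $\overline{\lambda_0} = -\lambda_0$ valid on the imaginary axis — and only then can one invoke the left/right eigenvector pairing, which is legitimate here precisely because Assumption~\ref{assumption-4} places $\lambda_0$ away from the continuous spectrum $\sigma_c(i\sigma_3\mathcal{L}) \subset i\mathbb{R}$. Items~1 and~3 are routine by comparison.
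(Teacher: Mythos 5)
Your proposal is correct and follows essentially the same route as the paper: item~1 via the $\mathcal{PT}$-normalization of Lemma~\ref{lem-PT-symmetry}, item~3 by pairing (\ref{original}) against (\ref{Adj}) to get $i(\lambda_0+\overline{\lambda}_0)K(\lambda_0)=0$, and item~2 by observing that $K(\lambda_0)=0$ is exactly the Fredholm solvability condition for the generalized-eigenvector equation, which would contradict algebraic simplicity. The only difference is bookkeeping: you recast the problem as $A=i\sigma_3\mathcal{L}$ with left eigenvector $w=\sigma_3 v_0^\#$, whereas the paper works directly with $(\mathcal{L}+i\lambda_0\sigma_3)v_g=\sigma_3 v_0$ and the kernel of $\mathcal{L}^*+i\lambda_0\sigma_3$; the substance is identical.
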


\begin{proof}
First, we prove that if $f$ and $g$ are $\mathcal{PT}$-symmetric functions,
then their inner product $\langle f, g\rangle$ is real-valued. Indeed, this follows from
\begin{align*}
    \langle f,g\rangle &= \int_{\mathbb{R}} f(x)\overline{g(x)} dx =
        \int_0^{+\infty} \bigl( f(x)\overline{g(x)} + f(-x)\overline{g(-x)}\bigr)dx
        \\ &= \int_0^{+\infty}
        \bigl( f(x)\overline{g(x)} + \overline{f(x)}g(x) \bigr) dx.
\end{align*}
Since $\lambda_0 \in i \mathbb{R}  \backslash \{0\}$ is simple by Assumption~\ref{assumption-4},
then the eigenvectors $v_0 := (Y,Z)$ and $v_0^\# := (Y^\#,Z^\#)$ satisfy the $\mathcal{PT}$-symmetry
(\ref{PT-sym-eigenvector}) and (\ref{PT-sym-eigenvector-adjoint}) by
Lemma \ref{lem-PT-symmetry}. Hence, the inner products in the definition of $K(\lambda_0)$ in (\ref{pt-krein})
are real.

Next, we prove that $K(\lambda_0) \ne 0$ if $\lambda_0 \in i\mathbb{R}  \backslash \{0\}$ is simple.
Consider a generalized eigenvector problem for the spectral problem ~\eqref{original}:
\begin{equation}
           (\mathcal{L} + i \lambda_0 \sigma_3 ) \left[\begin{array}{c} Y_g \\ Z_g \end{array}\right]
           = \sigma_3 \left[\begin{array}{c} Y \\ Z \end{array}\right].
\label{eqn-gen}
\end{equation}
Since $\lambda_0 \notin \sigma_c(i\sigma_3\mathcal{L})$ is isolated and simple by Assumption \ref{assumption-4},
there exists a solution $v_g := (Y_g,Z_g) \in D(L_0) \times D(L_0)$ to the nonhomogeneous
equation~\eqref{eqn-gen} if and only if
$\sigma_3 v_0$ is orthogonal to $v_0^\#$, which is
the kernel of adjoint operator $\mathcal{L}^* + i\lambda_0\sigma_3$.
The orthogonality condition coincides with $K(\lambda_0) = 0$.
However, no $v_g$ exists since $\lambda_0 \in i \mathbb{R}  \backslash \{0\}$
is simple by Assumption~\ref{assumption-4}. Hence $K(\lambda_0) \neq 0$.

Finally, we show that $K(\lambda_0) = 0$ if $\lambda_0 \in \mathbb{C}\backslash\{i\mathbb{R}\}$.
Taking inner products for the spectral problems \eqref{original} and \eqref{Adj} with the corresponding
eigenvectors yields
\begin{equation*}
\begin{cases}
   \langle \mathcal{L} v_0, v_0^\# \rangle \!\!\!\! &= -i\lambda_0
   \langle\sigma_3 v_0, v_0^\#\rangle, \\
   \langle v_0, \mathcal{L}^* v_0^\# \rangle
           \!\!\!\! &= i\overline{\lambda}_0
   \langle v_0,   \sigma_3 v_0^\#   \rangle,
\end{cases}
\end{equation*}
hence
\begin{equation*}
i (\lambda_0 + \overline{\lambda}_0) K(\lambda_0) = 0.
\end{equation*}
If $\lambda_0 \in \mathbb{C}\backslash\{i\mathbb{R}\}$, then $\lambda_0 + \overline{\lambda}_0 \neq 0$ and $K(\lambda_0) = 0$.
\end{proof}

We shall now compare the Krein quantity $K(\lambda_0)$ in (\ref{pt-krein})
for simple eigenvalues of the $\mathcal{PT}$-symmetric spectral problem
(\ref{original}) with the corresponding definition of the Krein quantity
in the Hamiltonian case $\gamma = 0$ and in the linear $\mathcal{PT}$-symmetric case $\Phi = 0$.

In the Hamiltonian case ($\gamma = 0$), the operator $\mathcal{L}$ in the spectral problem (\ref{original})
is self-adjoint in $L^2(\mathbb{R})$, that is, $\mathcal{L} = \mathcal{L}^*$.
The standard definition of Krein quantity \cite{kapitulabook,mackay} is given by
\begin{equation}
\label{Krein-Ham}
\gamma = 0: \quad    K(\lambda_0) =
\left\langle \mathcal{L} v_0, v_0\right\rangle =
        -i\lambda_0 \int_\mathbb{R} \left[ |Y(x)|^2 - |Z(x)|^2 \right] dx.
\end{equation}
If $\gamma=0$ and $\lambda_0 \in i \mathbb{R}$, then the adjoint eigenvector $(Y^\#,Z^\#)$ satisfies
the same equation as $(Y,Z)$. Therefore, it is natural to choose the adjoint eigenvector in the form:
\begin{equation}
\label{Ham-case}
\gamma = 0: \quad Y^\#(x) = Y(x), \quad Z^\#(x) = Z(x), \quad x \in \mathbb{R},
\end{equation}
in which case the definition~\eqref{pt-krein} yields the integral
in the right-hand side of (\ref{Krein-Ham}). Note that the signs
of $K(\lambda_0)$ in (\ref{pt-krein}) and (\ref{Krein-Ham})
are the same if $\lambda_0 \in i \mathbb{R}_+$.

\begin{remark}
Since the potential $V$ is even in (\ref{potentials}),
the eigenvector $v_0 := (Y,Z)$ of the spectral problem (\ref{original})
for a simple eigenvalue $\lambda_0 \in i \mathbb{R} \backslash\{0\}$ is either even or odd
in the Hamiltonian case $\gamma=0$
by the parity symmetry. It follows from the $\mathcal{PT}$-symmetry (\ref{PT-sym-eigenvector})
that the $\mathcal{PT}$-normalized eigenvector $v_0$ is real if it is even and is purely imaginary if it is odd.
\label{remark-real-odd}
\end{remark}

\begin{remark}
Since the adjoint eigenvector $v_0^\# := (Y^\#,Z^\#)$ satisfying the
$\mathcal{PT}$-symmetry condition (\ref{PT-sym-eigenvector-adjoint})
is defined up to an arbitrary sign, the Krein quantity $K(\lambda_0)$ in (\ref{pt-krein}) is defined up to
the sign change. In the continuation of the NLSE (\ref{NLS})
with respect to the parameter $\gamma$ from the Hamiltonian case $\gamma = 0$,
the sign of the Krein quantity $K(\lambda_0)$ in (\ref{pt-krein}) can be chosen so that
it matches the sign of $K(\lambda_0)$ in (\ref{Krein-Ham}) for $\lambda_0 \in i \mathbb{R}_+$ and $\gamma = 0$.
In other words, the choice (\ref{Ham-case}) is always made for $\gamma = 0$ and the Krein
quantity $K(\lambda_0)$ is extended continuously with respect to the parameter $\gamma$.
\label{remark-cont}
\end{remark}

In the linear $\mathcal{PT}$-symmetric case ($\Phi = 0$), the spectral problem (\ref{original})
becomes diagonal. If $Z = 0$, then $Y$ satisfies the scalar Schr\"{o}dinger equation
\begin{equation}
\label{scalar-spectral}
    \left[-\partial^2_x + V(x) + i\gamma W(x) - \mu \right] Y(x) = -i\lambda Y(x).
\end{equation}
The $\mathcal{PT}$-Krein signature for the simple eigenvalue $\lambda_0 \in i \mathbb{R}$
of the scalar Schr\"{o}dinger equation (\ref{scalar-spectral}) is defined in
\cite{yang} as follows:
\begin{equation}
\label{Krein-Jang}
\Phi = 0, \quad Z = 0: \qquad    K(\lambda_0) = \int_\mathbb{R} Y(x)\overline{Y(-x)}dx.
\end{equation}
If $\lambda_0 \in i \mathbb{R}$, then the adjoint eigenfunction $Y^\#$ satisfies
a complex-conjugate equation to the spectral problem (\ref{scalar-spectral}),
which becomes identical to (\ref{scalar-spectral}) after the parity transformation.
Therefore, it is natural to choose the adjoint eigenfunction $Y^\#$ in the form:
\begin{equation*}
\Phi = 0, \quad Z = 0: \qquad  Y^\#(x) = Y(-x), \quad x \in \mathbb{R},
\end{equation*}
after which the definition~\eqref{pt-krein} with $Z = 0$ corresponds to the
definition (\ref{Krein-Jang}). If $Y=0$, then $Z$ satisfies the scalar Schr\"{o}dinger equation
\begin{equation}
\label{scalar-spectral-Z}
    \left[-\partial^2_x + V(x) - i\gamma W(x) - \mu \right] Z(x) = i\lambda Z(x).
\end{equation}
The $\mathcal{PT}$-Krein signature for the simple eigenvalue $\lambda_0 \in i \mathbb{R}$
of the scalar Schr\"{o}dinger equation (\ref{scalar-spectral-Z}) is defined by
\begin{equation}
\label{Krein-Z-Jang}
\Phi = 0, \quad Y = 0: \qquad    K(\lambda_0) = \int_\mathbb{R} Z(x)\overline{Z(-x)}dx,
\end{equation}
which coincides with the definition~\eqref{pt-krein} for  $Y = 0$ if the
adjoint eigenfunction $Z^\#$ is chosen in the form:
\begin{equation}
\label{Jang-Z}
\Phi = 0,\quad Y = 0: \qquad Z^\#(x) = -Z(-x), \quad x \in \mathbb{R}.
\end{equation}
Note that if the choice $Z^\#(x) = Z(-x)$ is made instead of (\ref{Jang-Z}),
then the definition (\ref{pt-krein}) with $Y = 0$
is negative with respect to the definition (\ref{Krein-Z-Jang}).

\section{Necessary conditions of the instability bifurcation}
\label{sec-proof}

Recall that the eigenvalue is called {\em semi-simple} if algebraic and geometric multiplicities coincide
and {\em defective} if algebraic multiplicity exceeds geometric multiplicity.
Here we consider the case when the nonzero eigenvalue $\lambda_0 \in i \mathbb{R}$ of the spectral problem
(\ref{original}) is defective with geometric multiplicity {\em one} and algebraic multiplicity {\em two}.
This situation occurs in the parameter continuations of the NLSE (\ref{NLS}) when two simple
isolated eigenvalues $\lambda_1, \lambda_2 \in i \mathbb{R} \backslash \{0\}$ coalesce at the point $\lambda_0 \neq 0$
and split into the complex plane resulting in the {\em instability bifurcation}.
We will use the parameter $\gamma$ to control the coalescence of two simple eigenvalues $\lambda_1, \lambda_2 \in i \mathbb{R}$.

Our main result states that the instability bifurcation occurs from the defective eigenvalue
$\lambda_0 \in i \mathbb{R}$ of algebraic multiplicity two
only if the Krein signatures of $K(\lambda_1)$ and $K(\lambda_2)$ for the two simple isolated eigenvalues $\lambda_1, \lambda_2 \in i \mathbb{R}$
before coalescence are opposite to each other. Therefore, we obtain the necessary condition
for the instability bifurcation in the $\mathcal{PT}$-symmetric spectral problem (\ref{original}),
which has been proven for the Hamiltonian spectral problems~\cite{kapitulabook,mackay}.

\begin{remark}
The necessary condition for instability bifurcation allows us to predict 
the transition from stability to instability when a pair of imaginary eigenvalues collide.
Pairs with the same Krein signature do not bifurcate off the imaginary axis if they collide. 
In the contrast, pairs with the opposite Krein signature may bifurcate off the imaginary axis 
under a technical non-degeneracy condition (\ref{non-degeneracy}) below. 
\end{remark}

First, we state why the perturbation theory can be applied to the spectral problem (\ref{original}).

\begin{lemma}
\label{lem-operator-L}
Under Assumptions \ref{assumption-1}, \ref{assumption-2}, and \ref{assumption-3},
the operator
\begin{equation*}
\mathcal{L} : D(L_0) \times D(L_0) \to L^2(\mathbb{R}) \times L^2(\mathbb{R})
\end{equation*}
in the spectral problem (\ref{original}) is real-analytic with respect to
$\gamma \in (-\gamma_*,\gamma_*)$.
Consequently, if $\mathcal{L}(\gamma_0)$ with $\gamma_0 \in (-\gamma_*,\gamma_*)$
has a spectrum consisting of two separated parts, then the subspaces of $L^2(\mathbb{R}) \times L^2(\mathbb{R})$
corresponding to the separated parts are also real-analytic in $\gamma$.
\end{lemma}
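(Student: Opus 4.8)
The plan is to recognize $\mathcal{L}(\gamma)$ as a real-analytic family of operators of type (A) in the sense of Kato, and then to read off the statement about spectral subspaces from the Riesz projection formula.

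First I would decompose $\mathcal{L}(\gamma) = \mathcal{L}_0 + \mathcal{B}(\gamma)$, where $\mathcal{L}_0 := \diag(L_0 - \mu I, L_0 - \mu I)$ is the fixed unbounded part with $\gamma$-independent domain $D(L_0) \times D(L_0)$ (self-adjoint, hence closed, by Assumption~\ref{assumption-1}) and
\[
\mathcal{B}(\gamma) := \left[ \begin{array}{cc} i\gamma W - 2g|\Phi|^2 & -g\Phi^2 \\ -g\overline{\Phi}^2 & -i\gamma W - 2g|\Phi|^2 \end{array} \right]
\]
is multiplication by a matrix of bounded functions. Since $W$ is bounded (Assumption~\ref{assumption-2}) and $\Phi$ is bounded (Assumption~\ref{assumption-3}), each entry lies in $L^\infty(\mathbb{R})$, so $\mathcal{B}(\gamma)$ is a bounded operator on $L^2(\mathbb{R}) \times L^2(\mathbb{R})$; it then remains to check that $\gamma \mapsto \mathcal{B}(\gamma)$ is real-analytic into the space of bounded operators. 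The term $i\gamma W$ depends linearly on $\gamma$ with values in $L^\infty(\mathbb{R})$, hence is real-analytic. For the remaining entries, Assumption~\ref{assumption-3} gives real-analyticity of $\gamma \mapsto \Phi(\gamma) \in D(L_0)$; composing with the Sobolev embedding $D(L_0) \hookrightarrow H^2(\mathbb{R}) \hookrightarrow L^\infty(\mathbb{R})$ and with the bounded $\mathbb{R}$-linear conjugation $f \mapsto \overline{f}$ shows that $\gamma \mapsto \Phi(\gamma)$ and $\gamma \mapsto \overline{\Phi(\gamma)}$ are real-analytic into $L^\infty(\mathbb{R})$; then, using that multiplication $L^\infty(\mathbb{R}) \times L^\infty(\mathbb{R}) \to L^\infty(\mathbb{R})$ is continuous bilinear and that post-composition of a real-analytic Banach-space-valued map with a continuous multilinear map is again real-analytic, the maps $\gamma \mapsto |\Phi(\gamma)|^2,\ \Phi(\gamma)^2,\ \overline{\Phi(\gamma)}^2 \in L^\infty(\mathbb{R})$ are real-analytic. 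A final composition with the continuous embedding of $L^\infty(\mathbb{R})$ into bounded operators on $L^2(\mathbb{R})$ ($f \mapsto$ multiplication by $f$) yields real-analyticity of $\gamma \mapsto \mathcal{B}(\gamma)$, hence of $\gamma \mapsto \mathcal{L}(\gamma) = \mathcal{L}_0 + \mathcal{B}(\gamma)$ as a family of type (A): the domain is $\gamma$-independent and $\gamma \mapsto \mathcal{L}(\gamma) v$ is real-analytic for every $v \in D(L_0) \times D(L_0)$.

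For the consequence I would fix $\gamma_0$ and write $\sigma(\mathcal{L}(\gamma_0)) = \Sigma_1 \cup \Sigma_2$ with $\Sigma_1, \Sigma_2$ disjoint and closed, say $\Sigma_1$ bounded (the relevant case here, $\Sigma_1$ being a finite set of eigenvalues). Each $\mathcal{L}(\gamma)$ is closed, being a bounded perturbation of $\mathcal{L}_0$, and has nonempty resolvent set. Enclose $\Sigma_1$ by a rectifiable Jordan contour $\Gamma \subset \rho(\mathcal{L}(\gamma_0))$ separating it from $\Sigma_2$. The second resolvent identity at $\gamma_0$ together with a Neumann series expansion shows that $(z - \mathcal{L}(\gamma))^{-1}$ exists and is real-analytic in $\gamma$, uniformly for $z$ in a neighborhood of $\Gamma$, provided $|\gamma - \gamma_0|$ is small enough that $\|\mathcal{B}(\gamma) - \mathcal{B}(\gamma_0)\| \cdot \sup_{z\in\Gamma}\|(z - \mathcal{L}(\gamma_0))^{-1}\| < 1$. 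Consequently the Riesz spectral projections
\[
P(\gamma) = \frac{1}{2\pi i}\oint_\Gamma (z - \mathcal{L}(\gamma))^{-1}\, dz, \qquad I - P(\gamma),
\]
associated with $\Sigma_1$ and $\Sigma_2$ are real-analytic in $\gamma$ near $\gamma_0$, and Kato's construction of a real-analytic family of isomorphisms $U(\gamma)$ with $U(\gamma_0) = I$ and $U(\gamma)\,\ran P(\gamma_0) = \ran P(\gamma)$ shows that both spectral subspaces depend real-analytically on $\gamma$. The identical argument applies to $i\sigma_3 \mathcal{L}(\gamma)$, the operator whose eigenvalues enter the spectral problem (\ref{original}), since left multiplication by the fixed bounded invertible matrix $i\sigma_3$ preserves real-analyticity and closedness.

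The main obstacle is the first step: transferring real-analyticity of $\gamma \mapsto \Phi(\gamma)$ from the graph norm of $D(L_0)$, where Assumption~\ref{assumption-3} places it, to the operator-norm topology on $L^2(\mathbb{R})$. This hinges on the Sobolev embedding $D(L_0) \hookrightarrow L^\infty(\mathbb{R})$ and on the elementary but essential fact that a continuous multilinear map sends convergent vector-valued power series to convergent power series (by rearranging the Cauchy product and estimating it geometrically). Once $\mathcal{L}(\gamma)$ is identified as a real-analytic family of type (A), the spectral-subspace statement is the standard Kato--Rellich analytic perturbation theory (see, e.g., \cite{reed4}, Ch.~XII).
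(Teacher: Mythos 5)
Your proposal is correct and follows essentially the same route as the paper: the paper's proof simply observes that $\mathcal{L}(\gamma)$ depends on $\gamma$ only through $i\gamma W$ and the real-analytic (by Assumption \ref{assumption-3}) bound state $\Phi$, and then invokes Kato's Theorem 1.7 of Chapter VII, i.e.\ treats $\mathcal{L}(\gamma)$ as a real-analytic family of type (A) obtained by a bounded analytic perturbation of the fixed operator $\mathcal{L}_0$. Your write-up merely fills in the details the paper leaves to the citation (the $D(L_0)\hookrightarrow L^\infty$ embedding argument for analyticity of the multiplication part, and the Riesz-projection/transformation-operator argument for the spectral subspaces), so no further comment is needed.
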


\begin{proof}
Operator $\mathcal{L}$ depends on $\gamma$ via the potential $i \gamma W$ and the bound state $\Phi$,
the latter is real-analytic for $\gamma \in (-\gamma_*,\gamma_*)$ by Assumption~\ref{assumption-3}.
The assertion of the lemma follows from Theorem 1.7 in Chapter VII on p. 368 in \cite{Kato}.
\end{proof}

By Lemma \ref{lem-operator-L}, simple isolated eigenvalues
$\lambda_1, \lambda_2 \in i \mathbb{R}$  of the spectral problem (\ref{original})
and their eigenvectors $v_1 := (Y_1,Z_1)$ and $v_2 := (Y_2,Z_2)$ are continued analytically
in $\gamma$ before the coalescence point. Similarly,
the adjoint eigenvectors $v_1^\# := (Y_1^\#,Z_1^\#)$ and $v_2^\# := (Y_2^\#,Z_2^\#)$ of the adjoint spectral
problem (\ref{Adj}) for $\lambda_1, \lambda_2 \in i \mathbb{R}$ are continued analytically in $\gamma$. Therefore, the
Krein quantities $K(\lambda_1)$ and $K(\lambda_2)$ are continued analytically in $\gamma$.

Let $\gamma_0$ denote the bifurcation point when the two eigenvalues coalesce:
$\lambda_1 = \lambda_2 = \lambda_0 \in i \mathbb{R} \backslash \{0\}$.
For this $\gamma_0 \in \mathbb{R}$, we can define a small parameter $\varepsilon\in\mathbb{R}$ such that
$\gamma = \gamma_0 + \varepsilon$. If $\mathcal{L}$ is denoted by $\mathcal{L}(\gamma)$,
then $\mathcal{L}(\gamma)$ can be represented by the Taylor expansion:
\begin{equation}
\mathcal{L}(\gamma) = \mathcal{L}(\gamma_0)
	+ \varepsilon \mathcal{L'}(\gamma_0) + \varepsilon^2 \hat{\mathcal{L}}(\varepsilon),
\label{operator:series}
\end{equation}
where $\hat{\mathcal{L}}(\varepsilon)$ denotes the remainder terms,
\begin{equation}
\mathcal{L}'(\gamma_0) =
\left[\begin{array}{cc}
        iW - 2 g \partial_\gamma |\Phi(\gamma_0)|^2 & - g \partial_\gamma \Phi^2(\gamma_0) \\[2pt]
        -g \partial_\gamma \overline{\Phi^2(\gamma_0)} & -iW - 2 g \partial_\gamma |\Phi(\gamma_0)|^2
\end{array}\right],
\label{operator:lprime}
\end{equation}
and $\partial_\gamma$ denotes a partial derivative with respect to the parameter $\gamma$.
Since the remainder terms in $\hat{\mathcal{L}}(\varepsilon)$ come from the second derivative of $\Phi$ in $\gamma$ near $\gamma_0$,
then $\hat{\mathcal{L}}(\varepsilon) \in L^2(\mathbb{R}) \cap L^{\infty}(\mathbb{R})$ thanks to Assumption \ref{assumption-3}.

Instead of Assumption~\ref{assumption-4}, we shall now use the following assumption.

\setcounter{assumption}{3}
\renewcommand{\theassumption}{(A\arabic{assumption}$^\prime$)}
\begin{assumption}
\label{assumption-5}
For $\gamma = \gamma_0$, we assume that there exists a
defective isolated eigenvalue $\lambda_0 \in i \mathbb{R} \backslash \{0\}$
of the spectral problems (\ref{original}) and (\ref{Adj})
with the eigenvector $v_0 := (Y_0,Z_0) \in D(L_0) \times D(L_0)$,
the generalized eigenvector $v_g := (Y_g,Z_g) \in D(L_0) \times D(L_0)$
and the adjoint eigenvector $v_0^\# := (Y^\#_0,Z^\#_0) \in D(L_0) \times D(L_0)$,
the adjoint generalized eigenvector $v_g^\# := (Y^\#_g,Z^\#_g) \in D(L_0) \times D(L_0)$,
respectively.
\end{assumption}
\renewcommand{\theassumption}{(A\arabic{assumption})}

By setting $\lambda_0 = i \Omega_0$, we can write the linear equations
for the eigenvectors and generalized eigenvectors in  Assumption~\ref{assumption-5}:
\begin{gather}
\mathcal{L}(\gamma_0)v_0 = \Omega_0 \sigma_3 v_0, \qquad
\mathcal{L}(\gamma_0)v_g = \Omega_0 \sigma_3 v_g + \sigma_3 v_0,
\label{eig:omega:orig} \\
\mathcal{L}^*(\gamma_0)v_0^\# =
    \Omega_0 \sigma_3 v_0^\#, \qquad
\mathcal{L}^*(\gamma_0)v_g^\# = \Omega_0 \sigma_3 v_g^\# + \sigma_3 v_0^\#.
\label{eig:omega:adj}
\end{gather}
The solvability conditions for the inhomogeneous equations in (\ref{eig:omega:orig}) and (\ref{eig:omega:adj})
yield the following elementary facts.

\begin{lemma}
\label{lem-facts}
Under Assumption~\ref{assumption-5}, we have
\begin{equation}
\label{fact-1}
K(\lambda_0) = \langle v_0,\sigma_3 v_0^\#\rangle = 0.
\end{equation}
and
\begin{equation}
\label{fact-2}
\langle v_g,\sigma_3 v_0^\#\rangle = \langle v_0,\sigma_3 v_g^\#\rangle \neq 0.
\end{equation}
\end{lemma}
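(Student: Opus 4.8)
The plan is to derive both facts from Fredholm-type solvability conditions applied to the inhomogeneous equations in (\ref{eig:omega:orig}) and (\ref{eig:omega:adj}), using the fact that $\lambda_0 = i\Omega_0$ is isolated with geometric multiplicity one, so the kernel of $\mathcal{L}(\gamma_0) - \Omega_0\sigma_3$ is spanned by $v_0$ and the kernel of the adjoint operator $\mathcal{L}^*(\gamma_0) - \Omega_0\sigma_3$ is spanned by $v_0^\#$. First I would establish (\ref{fact-1}): the second equation in (\ref{eig:omega:orig}), namely $(\mathcal{L}(\gamma_0) - \Omega_0\sigma_3)v_g = \sigma_3 v_0$, is solvable for $v_g$, and since $v_0^\#$ spans the kernel of the adjoint operator, the Fredholm alternative forces $\langle \sigma_3 v_0, v_0^\#\rangle = 0$, which is exactly $K(\lambda_0) = \langle v_0, \sigma_3 v_0^\#\rangle = 0$ after noting $\sigma_3^* = \sigma_3$. (This is consistent with the mechanism already used in the proof of Lemma \ref{lem-krein}, where $K(\lambda_0) \ne 0$ was shown precisely because a simple eigenvalue admits \emph{no} generalized eigenvector.)

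For (\ref{fact-2}) I would first prove the equality of the two pairings. Take the inner product of the $v_g$-equation in (\ref{eig:omega:orig}) with $v_g^\#$ and of the $v_g^\#$-equation in (\ref{eig:omega:adj}) with $v_g$; subtracting and using $\langle \mathcal{L}(\gamma_0) v_g, v_g^\#\rangle = \langle v_g, \mathcal{L}^*(\gamma_0) v_g^\#\rangle$ together with the self-adjointness of $\sigma_3$, the $\Omega_0\sigma_3$ terms cancel and one is left with $\langle \sigma_3 v_0, v_g^\#\rangle = \langle \sigma_3 v_g, v_0^\#\rangle$, i.e. $\langle v_0, \sigma_3 v_g^\#\rangle = \langle v_g, \sigma_3 v_0^\#\rangle$ (here the reality of these pairings, or at least a conjugation bookkeeping consistent with the earlier $\mathcal{PT}$-symmetry normalizations, should be checked, but it is routine).

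The substantive part is showing this common value is \emph{nonzero}. The argument is that if $\langle v_g, \sigma_3 v_0^\#\rangle = 0$, then the Fredholm alternative would permit solving $(\mathcal{L}(\gamma_0) - \Omega_0\sigma_3) v_{gg} = \sigma_3 v_g$ for a second generalized eigenvector $v_{gg}$, producing a Jordan chain of length at least three and hence algebraic multiplicity at least three, contradicting the assumed algebraic multiplicity exactly two in Assumption~\ref{assumption-5}. I expect this last step — connecting the non-vanishing of the pairing to the precise algebraic multiplicity via the Jordan chain structure — to be the main obstacle, since it requires care that $\lambda_0$ is isolated (so that the spectral projection and the finite-dimensional Jordan block picture are valid) and that the pairing $\langle \cdot, \sigma_3 v_0^\#\rangle$ is the correct solvability functional at each stage of the chain; this is where Assumption~\ref{assumption-5}, Lemma~\ref{lemma-continuous-spectrum} (isolation from the continuous spectrum), and Lemma~\ref{lem-operator-L} (analytic/finite-rank reduction à la Kato) all get used. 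The remaining manipulations are bookkeeping with $\sigma_3$ and conjugates.
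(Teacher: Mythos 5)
Your proposal is correct and follows essentially the same route as the paper: the solvability (Fredholm) condition for the generalized-eigenvector equation gives $K(\lambda_0)=0$, the adjoint identity $\langle (\mathcal{L}-\Omega_0\sigma_3)v_g, v_g^\#\rangle = \langle v_g, (\mathcal{L}^*-\Omega_0\sigma_3)v_g^\#\rangle$ gives the equality in (\ref{fact-2}), and the nonvanishing follows from the nonexistence of a second generalized eigenvector forced by algebraic multiplicity two, exactly as in the paper (stated there in contrapositive form). The technical worry you flag about the solvability functional is handled by the isolation of $\lambda_0$ assumed in \ref{assumption-5} and poses no real obstacle.
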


\begin{proof}
Since $v_g$ exists by Assumption~\ref{assumption-5}, the
solvability condition for (\ref{eig:omega:orig}) implies (\ref{fact-1}),
see similar computations in Lemma \ref{lem-krein}.
Since the eigenvalue $\lambda_0$ is double, no
second generalized eigenvector $\tilde{v}_g$ exists such that
\begin{equation*}
\mathcal{L}(\gamma_0) \tilde{v}_g = \Omega_0 \sigma_3 \tilde{v}_g + \sigma_3 v_g.
\end{equation*}
The nonsolvability condition for this equation implies $\langle v_g, \sigma_3 v_0^\# \rangle \ne 0$.
Finally, equations ~\eqref{eig:omega:orig} and \eqref{eig:omega:adj} yield
\begin{align*}
    \langle v_g, \sigma_3 v_0^\#\rangle &=
    \langle v_g, (\mathcal{L}^*-\Omega_0 \sigma_3)v_g^\#\rangle =
    \langle (\mathcal{L}-\Omega_0 \sigma_3)v_g, v_g^\#\rangle \\ &=
    \langle \sigma_3 v_0, v_g^\#\rangle =
    \langle v_0, \sigma_3 v_g^\#\rangle,
\end{align*}
which proves the symmetry in (\ref{fact-2}).
\end{proof}

\begin{remark}
Since the generalized eigenvectors are given by solutions of the inhomogeneous linear equations
(\ref{eig:omega:orig}) and (\ref{eig:omega:adj}) and the eigenvectors satisfy
the $\mathcal{PT}$-symmetry (\ref{PT-sym-eigenvector}) and
(\ref{PT-sym-eigenvector-adjoint}), the generalized eigenvectors also satisfy the same $\mathcal{PT}$-symmetry
(\ref{PT-sym-eigenvector}) and (\ref{PT-sym-eigenvector-adjoint}).
\end{remark}

The following result gives the necessary condition that the defective eigenvalue $\lambda_0$ in Assumption~\ref{assumption-5}
splits into the complex plane in a one-sided neighborhood of the bifurcation point $\gamma_0$.

\begin{theorem}
\label{theorem-main}
Assume \ref{assumption-1}, \ref{assumption-2}, \ref{assumption-3}, \ref{assumption-5}, and
the non-degeneracy condition
\begin{equation}
\label{non-degeneracy}
\langle \mathcal{L'}(\gamma_0) v_0, v_0^\#\rangle \ne 0.
\end{equation}
There exists $\varepsilon_0 > 0$ such that two simple eigenvalues $\lambda_1,\lambda_2$
of the spectral problem (\ref{original}) exist near $\lambda_0$ for every $\varepsilon \in (-\varepsilon_0,\varepsilon_0) \backslash \{0\}$
with $\lambda_{1,2} \to \lambda_0$ as $\varepsilon \to 0$.
On one side of $\varepsilon = 0$, the eigenvalues are $\lambda_1, \lambda_2 \in i \mathbb{R}$ and
\begin{equation}
\label{opposite-Krein}
{\rm sign} K(\lambda_1) = -{\rm sign} K(\lambda_2).
\end{equation}
On the other side of $\varepsilon = 0$, the eigenvalues are $\lambda_1,\lambda_2 \notin i \mathbb{R}$.
\end{theorem}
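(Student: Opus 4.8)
\emph{Proof strategy.} The plan is a finite-dimensional reduction near $\lambda_0$ followed by Jordan-block perturbation theory; the $\mathcal{PT}$-symmetry is used at two points — to force the branching coefficient to be real, and to pin down the relative sign of the two Krein quantities.

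First I would pass to a standard eigenvalue problem. Setting $\lambda=\lambda_0+\mu$ and recalling Lemma~\ref{lemma-continuous-spectrum}, recast (\ref{original}) as $\mathcal{A}(\varepsilon)v=\lambda v$ for the analytic family $\mathcal{A}(\varepsilon):=i\sigma_3\mathcal{L}(\gamma_0+\varepsilon)$; under the substitution $v^{\#\#}:=\sigma_3 v^\#$ the adjoint problem (\ref{Adj}) becomes the $\mathcal{A}^*$-eigenvalue problem, and $K(\lambda)=\langle v,\sigma_3 v^\#\rangle=\langle v,v^{\#\#}\rangle$ becomes the pairing of an $\mathcal{A}$-eigenvector with the corresponding $\mathcal{A}^*$-eigenvector. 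Since $\lambda_0$ is isolated, for $|\varepsilon|$ small the Riesz projection onto the part of $\sigma(\mathcal{A}(\varepsilon))$ near $\lambda_0$ is analytic of constant rank two (Lemma~\ref{lem-operator-L} and \cite{Kato}), so the eigenvalues near $\lambda_0$ are those of an analytic $2\times2$ matrix which at $\varepsilon=0$ is the Jordan block at $\lambda_0$. I would organize the bookkeeping with a dual Jordan pair $\{u_0,u_1\}$, $\{u_0^\#,u_1^\#\}$, with $u_0$, $u_1$ proportional to $v_0$, $v_g$ and $u_0^\#$, $u_1^\#$ proportional to $\sigma_3 v_0^\#$, $\sigma_3 v_g^\#$, normalized so that $\langle u_1,u_1^\#\rangle=0$; then $\langle u_0,u_0^\#\rangle=0$ automatically, which is $K(\lambda_0)=0$ of Lemma~\ref{lem-facts}, while $d:=\langle u_0,u_1^\#\rangle=\langle u_1,u_0^\#\rangle$ is a nonzero multiple of $\langle v_g,\sigma_3 v_0^\#\rangle$, again by Lemma~\ref{lem-facts}.

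Next comes the branching relation. Writing an eigenvector of $\mathcal{A}(\varepsilon)$ as $v=au_0+bu_1+w$ with $w$ in the complementary spectral subspace of $\mathcal{A}(0)$ (so that $w\perp u_0^\#,u_1^\#$) and projecting onto $u_0^\#$ and $u_1^\#$, I use that $(\mathcal{A}(0)-\lambda_0)^*u_0^\#=0$ and the orthogonality of $w$ to see that the $O(\varepsilon)$ remainder $w$ drops out of the two reduced equations to leading order; solving them gives $b=\mu a\,(1+o(1))$ and $\mu^2=c\,\varepsilon+o(\varepsilon)$, where $c$ is a nonzero multiple of $\langle\mathcal{L}'(\gamma_0)v_0,v_0^\#\rangle/\langle v_g,\sigma_3 v_0^\#\rangle$, nonzero thanks to the non-degeneracy condition (\ref{non-degeneracy}) and $\langle v_g,\sigma_3 v_0^\#\rangle\ne0$. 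The crucial point is that $c$ is \emph{real}: the vectors $v_0,v_g,v_0^\#,v_g^\#$ are $\mathcal{PT}$-symmetric (Lemma~\ref{lem-PT-symmetry} and the remark after Lemma~\ref{lem-facts}), both $\mathcal{L}'(\gamma_0)$ and $\sigma_3$ commute with $\mathcal{PT}$, and the inner product of two $\mathcal{PT}$-symmetric functions is real (the computation opening the proof of Lemma~\ref{lem-krein}). Hence for $\varepsilon$ of one sign $\mu\in i\mathbb{R}\setminus\{0\}$, so $\lambda_{1,2}=\lambda_0\pm\mu\in i\mathbb{R}$; for $\varepsilon$ of the other sign $\mu\in\mathbb{R}\setminus\{0\}$, so $\mathrm{Re}\,\lambda_{1,2}=\pm\,\mathrm{Re}\,\mu\ne0$. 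The two roots $\pm\mu$ are distinct for $\varepsilon\ne0$, so $\lambda_1,\lambda_2$ are simple. This establishes everything except (\ref{opposite-Krein}).

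Finally, the Krein signatures on the stable side, where $\mu_1=i\nu$, $\mu_2=-i\nu$ with $\nu\in\mathbb{R}\setminus\{0\}$. In the natural basis $u_0=v_0$, $u_1=-iv_g$, $u_0^\#=\sigma_3 v_0^\#$, $u_1^\#=i\sigma_3 v_g^\#$ (with $v_g$ adjusted by a real multiple of $v_0$ so that $\langle u_1,u_1^\#\rangle=0$) one has $d=-i\langle v_g,\sigma_3 v_0^\#\rangle$, purely imaginary and nonzero. From the reduction, $v^{(j)}=a_j(u_0+\mu_j u_1)+O(\mu_j^2)$ and the corresponding $\mathcal{A}^*$-eigenvector is $v^{(j)\#\#}=a_j^\#(u_0^\#+\overline{\mu_j}u_1^\#)+O(\mu_j^2)$ (its branching coefficient is $\overline{c}=c$), and the $w$-corrections again do not reach leading order; using $\langle u_0,u_0^\#\rangle=0$, $\langle u_1,u_1^\#\rangle=0$, $\langle u_0,u_1^\#\rangle=\langle u_1,u_0^\#\rangle=d$, I get $K(\lambda_j)=2\mu_j d\,a_j\overline{a_j^\#}+O(\mu_j^2)=\pm 2\nu\langle v_g,\sigma_3 v_0^\#\rangle\,a_j\overline{a_j^\#}+O(\nu^2)$, real and with \emph{opposite} signs for $j=1$ and $j=2$. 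The $\mathcal{PT}$-normalization makes $a_j$, $a_j^\#$ real, and as $\gamma\to\gamma_0$ from the stable side $v^{(1)},v^{(2)}$ collapse onto $\mathbb{R}v_0$ while $v^{(1)\#},v^{(2)\#}$ collapse onto $\mathbb{R}v_0^\#$, so a continuous normalization assigns $a_1,a_2$ a common sign and $a_1^\#,a_2^\#$ a common sign; hence $(a_1 a_1^\#)(a_2 a_2^\#)>0$ and $K(\lambda_1)K(\lambda_2)<0$, which is (\ref{opposite-Krein}).

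The routine parts are the finite-dimensional reduction and the Lyapunov--Schmidt bookkeeping. I expect two genuine difficulties: (i) proving $c$ is real — exactly where $\mathcal{PT}$-symmetry of the (generalized) eigenvectors is indispensable, since it is what keeps the eigenvalues on $i\mathbb{R}$ on one whole side of $\gamma_0$; and (ii) giving (\ref{opposite-Krein}) a well-posed meaning despite the intrinsic sign ambiguity of the adjoint eigenvectors. The resolution of (ii) is that only the sign-invariant product $K(\lambda_1)K(\lambda_2)$ needs to be controlled, and its sign is fixed by the leading-order formula of the last paragraph together with the continuity of the eigenvector families at the collision point $\gamma_0$.
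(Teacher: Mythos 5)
Your proposal is correct and follows essentially the same route as the paper: the identical branching coefficient $\Omega_g^2=\langle \mathcal{L}'(\gamma_0)v_0,v_0^\#\rangle/\langle v_g,\sigma_3 v_0^\#\rangle$, shown real through the $\mathcal{PT}$-symmetry of $v_0$, $v_g$, $v_0^\#$, and the same leading-order formula $K(\lambda_{1,2})=\pm 2\varepsilon^{1/2}\Omega_g\langle v_g,\sigma_3 v_0^\#\rangle+\mathcal{O}(\varepsilon)$ combined with $\langle v_g,\sigma_3 v_0^\#\rangle=\langle v_0,\sigma_3 v_g^\#\rangle\neq 0$ from Lemma~\ref{lem-facts}. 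The only differences are organizational: you reduce to a $2\times 2$ Jordan block via the Riesz projection where the paper uses a direct Puiseux ansatz with solvability conditions and resolvent estimates, and your explicit resolution of the adjoint-eigenvector sign ambiguity through continuity at the collision matches the paper's continuation convention in Remark~\ref{remark-cont}.
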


\begin{proof}
We are looking for an eigenvalue $\Omega(\varepsilon)$ of the perturbed spectral problem
\begin{equation}
\left[ \mathcal{L}_0
	+ \varepsilon \widetilde{\mathcal{L}}(\varepsilon) \right] v(\varepsilon) =
\Omega(\varepsilon) \sigma_3 v(\varepsilon),
\label{original:perturbed}
\end{equation}
such that $\Omega(\varepsilon) \to \Omega_0$ as $\varepsilon \to 0$.
Here we denote operators from the decomposition \eqref{operator:series} as
$\mathcal{L}_0 = \mathcal{L}(\gamma_0)$ and $\widetilde{\mathcal{L}}(\varepsilon) =
\mathcal{L}'(\gamma_0) + \varepsilon \hat{\mathcal{L}}(\varepsilon)$.
Since $\Omega_0$ is a defective eigenvalue of geometric multiplicity {\em one}
and algebraic multiplicity {\em two}, we apply Puiseux expansions~\cite{Knopp}:
\begin{equation}
\left\{ \begin{array}{l}
        \Omega(\varepsilon) = \Omega_0 + \varepsilon^{1/2} \widetilde\Omega(\varepsilon), \\
        v(\varepsilon) = v_0 + \varepsilon^{1/2} a(\varepsilon) v_g
        + \varepsilon \widetilde{v_1}(\varepsilon),
                    \end{array} \right.
                    \label{Puiseux}
\end{equation}
where $v_0$ and $v_g$ are the eigenvector and the generalized
eigenvector of the spectral problem (\ref{eig:omega:orig}),
$a(\varepsilon)$ is the projection coefficient to be defined, and
$\widetilde\Omega(\varepsilon)$ and $\widetilde{v_1}(\varepsilon)$ are the remainder terms.
To define $\widetilde{v_1}(\varepsilon)$ uniquely, we add the orthogonality condition
\begin{equation}
    \langle \widetilde{v_1}(\varepsilon), \sigma_3 v_0^\# \rangle
= \langle \widetilde{v_1}(\varepsilon), \sigma_3 v_g^\# \rangle = 0.
\label{ve-orthogonal}
\end{equation}
Plugging~\eqref{Puiseux} into~\eqref{original:perturbed} and dropping the dependence on $\varepsilon$
for $\widetilde{\mathcal{L}}$, $\widetilde{v}_1$, $a$ and $\widetilde{\Omega}$ gives us
the nonhomogeneous equation
\begin{gather}
\left( \mathcal{L}_0 - \Omega_0 \sigma_3
+ \varepsilon \widetilde{\mathcal{L}} - \varepsilon^{1/2} \widetilde{\Omega} \sigma_3 \right) \widetilde{v}_1 = 
\varepsilon^{-1/2} (\widetilde{\Omega} - a) \sigma_3 v_0
    - \widetilde{\mathcal{L}} v_0 + a
\bigl( \widetilde{\Omega} \sigma_3
    - \varepsilon^{1/2} \widetilde{\mathcal{L}} \bigr) v_g.
\label{ve-equation}
\end{gather}
By Assumption~\ref{assumption-5}, the limiting operator $\sigma_3 (\mathcal{L}_0 - \Omega_0 \sigma_3)$ has the two-dimensional
generalized null space $X_0 = {\rm span}\{v_0,v_g\} \subset  L^2(\mathbb{R}) \times L^2(\mathbb{R})$.
Since $\Omega_0 \notin \sigma_c(\sigma_3 \mathcal{L}_0)$ is isolated from the rest of the spectrum of $\sigma_3 \mathcal{L}_0$,
the range of $\sigma_3 (\mathcal{L}_0 - \Omega_0 \sigma_3)$ is orthogonal with respect to generalized null space
$Y_0 = {\rm span}\{ \sigma_3 v_0^\#, \sigma_3 v_g^\#\} \subset L^2(\mathbb{R}) \times L^2(\mathbb{R})$
of the adjoint operator $(\mathcal{L}_0^* - \Omega_0 \sigma_3) \sigma_3$. As a result, $\sigma_3 (\mathcal{L}_0 - \Omega_0 \sigma_3)$ is
invertible on an element of $Y_0^\perp$ and the inverse operator
is uniquely defined and bounded in $Y_0^{\perp}$. In other words, 
there exist positive constants $\varepsilon_0$, $\Omega_0$, and $C_0$ such that for
all $|\varepsilon| \leq \varepsilon_0$, $|\widetilde{\Omega}| \leq \Omega_0$, and all
$\sigma_3 f\in Y_0^\perp$, there exists a unique
$(\mathcal{L}_0 - \Omega_0 \sigma_3)^{-1} f \in D(L_0) \times D(L_0)$
satisfying the orthogonality conditions (\ref{ve-orthogonal}) and the bound
\begin{equation}
\label{resolvent-estimate}
    \| (\mathcal{L}_0 - \Omega_0 \sigma_3)^{-1} f\|_{L^2} \le C_0 \|f\|_{L^2}.
\end{equation}
In order to provide existence of a unique $(\mathcal{L}_0 - \Omega_0 \sigma_3)^{-1} f$,
we add the orthogonality constraints $\langle f,v_0^\#\rangle = \langle f,v_g^\#\rangle = 0$.
By using~\eqref{fact-2} and ~\eqref{ve-orthogonal}, we obtain two equations:
\begin{gather}
    \varepsilon \langle \widetilde{\mathcal{L}} \widetilde{v}_1, v_0^\#\rangle +
    \langle \widetilde{\mathcal{L}} v_0, v_0^\# \rangle =
    \widetilde\Omega a \langle v_g, \sigma_3 v_0^\#\rangle
    - \varepsilon^{1/2} a \langle \widetilde{\mathcal{L}} v_g, v_0^\# \rangle,
\label{proj-v0s}
\end{gather}
and
\begin{align}
    \varepsilon \langle \widetilde{\mathcal{L}} \widetilde{v_1}, v_g^\#\rangle +
    \langle \widetilde{\mathcal{L}} v_0,v_g^\#\rangle =
    & \, \widetilde\Omega a \langle v_g,\sigma_3 v_g^\#\rangle \notag \\
    &+ \varepsilon^{-1/2} (\widetilde\Omega - a) \langle v_0,\sigma_3 v_g^\#\rangle
    - \varepsilon^{1/2} a \langle \widetilde{\mathcal{L}} v_g,v_g^\#\rangle.
\label{proj-v1s}
\end{align}
Since $\widetilde{\mathcal{L}}$ and $\widetilde\Omega \sigma_3$
are relatively compact perturbations to $(\mathcal{L}_0 - \Omega_0 \sigma_3)$,
under the constraints~\eqref{proj-v0s} and \eqref{proj-v1s}, there exists a unique
solution of the nonhomogeneous equation~\eqref{ve-equation}
satisfying the orthogonality conditions (\ref{ve-orthogonal})
and the resolvent estimate~\eqref{resolvent-estimate}. In particular, there exist positive constants
$\varepsilon_0$, $\Omega_0$, $A_0$, and $C_0$ such that for
all $|\varepsilon| \leq \varepsilon_0$, $|\widetilde{\Omega}| \leq \Omega_0$, and $|a| \leq A_0$,
the solution $\widetilde{v}_1 \in D(L_0) \times D(L_0)$ of equation (\ref{ve-equation}) satisfies the estimate
\begin{equation}
    \| \widetilde{v_1} \|_{L^2} \le C_0 \left( \varepsilon^{-1/2}|a - \widetilde{\Omega}| + 1 + |\widetilde{\Omega}|^2\right).
\label{ve-estimate}
\end{equation}
Equation~\eqref{proj-v1s} yields
\begin{align*}
\varepsilon^{-1/2} (a - \widetilde{\Omega}) =
\frac{1}{\langle v_0, \sigma_3 v_g^\#\rangle}
&\left( \widetilde\Omega a \langle v_g, \sigma_3 v_g^\#\rangle
            - \varepsilon^{1/2} a \langle \widetilde{\mathcal{L}} v_g,v_g^\#\rangle\right. \notag \\
            &\quad - \left.\langle \widetilde{\mathcal{L}} v_0, v_g^\#\rangle
        - \varepsilon \langle \widetilde{\mathcal{L}} \widetilde{v}_1, v_g^\#\rangle \right),
\end{align*}
where~$\langle v_0, \sigma_3 v_g^\#\rangle\ne 0$ due to Lemma~\ref{lem-facts}.
Combining with the estimate (\ref{ve-estimate}), we obtain for some $C_1>0$
\begin{equation}
\label{ve-estimate-a}
    |a - \widetilde\Omega | \le C_1 \varepsilon^{1/2} (1 + |\widetilde{\Omega}|^2) \quad \text{and} \quad
    \|\tilde{v}_1\|_{L^2} \le C_1 (1 + |\widetilde{\Omega}|^2).
\end{equation}
Equation~\eqref{proj-v0s} yields
\begin{gather*}
    \widetilde\Omega a = \frac{1}{\langle v_g, \sigma_3 v_0^\#\rangle}
\left( \langle \widetilde{\mathcal{L}} v_0, v_0^\#\rangle
    + \varepsilon^{1/2} a \langle \widetilde{\mathcal{L}} v_g,v_0^\#\rangle
+ \varepsilon \langle \widetilde{\mathcal{L}} \widetilde{v_1}, v_0^\# \rangle \right),
\end{gather*}
where $\langle v_g, \sigma_3 v_0^\#\rangle\ne 0$ due to Lemma~\ref{lem-facts}.
Thanks to (\ref{ve-estimate-a}), we obtain
\begin{equation*}
    | \widetilde\Omega - \Omega_g | \le C_2 \varepsilon^{1/2},
\end{equation*}
where $C_2>0$ is a constant, and $\Omega_g$ is a root of the quadratic equation
\begin{equation}
    \Omega_g^2 = \frac{\langle \mathcal{L}'(\gamma_0) v_0,v_0^\#\rangle }
    {\langle v_g,\sigma_3 v_0^\#\rangle},
                      \label{eq:omegag}
\end{equation}
with $\mathcal{L}'(\gamma_0)$ given by~\eqref{operator:lprime}.
Since $\mathcal{L}'(\gamma_0) v_0$, $v_g$, and $v_0^\#$ satisfy the
$\mathcal{PT}$-conditions (\ref{PT-sym}), (\ref{PT-sym-eigenvector}), and (\ref{PT-sym-eigenvector-adjoint}),
both the nominator and the denominator of (\ref{eq:omegag}) are real-valued by the same computations as in
the proof of Lemma \ref{lem-krein}. By the assumption (\ref{non-degeneracy}),
$\Omega_g^2$ is nonzero, either positive or negative.

Let us assume that $\Omega_g^2 > 0$ without loss of generality and pick $\Omega_g > 0$.
Then $\varepsilon^{1/2} \Omega_g \in \mathbb{R}$ if $\varepsilon > 0$ and we obtain the expansions
for the two simple eigenvalues:
\begin{equation*}
\begin{cases}
    \Omega_1(\varepsilon) = \Omega_0 + \varepsilon^{1/2} \Omega_g
        + \mathcal{O}(\varepsilon), \\
    \Omega_2(\varepsilon) = \Omega_0 - \varepsilon^{1/2} \Omega_g
        + \mathcal{O}(\varepsilon)
\end{cases}
\end{equation*}
and their corresponding eigenvectors:
\begin{equation*}
\begin{cases}
    v_1(\varepsilon) = v_0 + \varepsilon^{1/2} \Omega_g v_g
        + \mathcal{O}(\varepsilon), \\
    v_2(\varepsilon) = v_0 - \varepsilon^{1/2} \Omega_g v_g
        + \mathcal{O}(\varepsilon).
\end{cases}
\end{equation*}
The same expansions hold for eigenvectors of the adjoint spectral problems
corresponding to the same eigenvalues $\Omega_1,\Omega_2$:
\begin{equation*}
\begin{cases}
    v_1^\#(\varepsilon) = v_0^\# + \varepsilon^{1/2} \Omega_g v_g^\#
        + \mathcal{O}(\varepsilon), \\
    v_2^\#(\varepsilon) = v_0^\# - \varepsilon^{1/2} \Omega_g v_g^\#
        + \mathcal{O}(\varepsilon).
\end{cases}
\end{equation*}
The leading order of Krein quantitites for eigenvalues $\lambda_1 = i\Omega_1$ and $\lambda_2 = i\Omega_2$
is given by
\begin{equation*}
\begin{cases}
    K(\lambda_1) = \langle v_1,\sigma_3 v_1^\#\rangle
    = \varepsilon^{1/2} \Omega_g
    \langle v_g, \sigma_3 v_0^\#\rangle + \overline{\varepsilon^{1/2} \Omega_g}
        \langle v_0, \sigma_3 v_g^\#\rangle + \mathcal{O}(\varepsilon),
        \\
        K(\lambda_2) = \langle v_2,\sigma_3 v_2^\#\rangle =
        -\varepsilon^{1/2} \Omega_g \langle v_g,\sigma_3 v_0^\#\rangle
    - \overline{\varepsilon^{1/2} \Omega_g} \langle v_0,\sigma_3 v_g^\#\rangle
    + \mathcal{O}(\varepsilon),
\end{cases}
\end{equation*}
which is simplified with the help of (\ref{fact-2}) to
\begin{equation*}
\begin{cases}
    K(\lambda_1) = 2\varepsilon^{1/2} \Omega_g \langle v_g, \sigma_3 v_0^\#\rangle
            + \mathcal{O}(\varepsilon), \\
    K(\lambda_2) = -2\varepsilon^{1/2} \Omega_g \langle v_g,\sigma_3 v_0^\#\rangle
                + \mathcal{O}(\varepsilon).

\end{cases}
\end{equation*}
Since $\epsilon^{1/2} \Omega_g \in \mathbb{R}$ and $\langle v_g, \sigma_3 v_0^\# \rangle \neq 0$,
we obtain (\ref{opposite-Krein}).
If $\varepsilon < 0$, then $\epsilon^{1/2} \Omega_g \in i \mathbb{R}$,
so that $\lambda_1, \lambda_2 \notin i \mathbb{R}$.
\end{proof}

\begin{remark}
If the non-degeneracy assumption (\ref{non-degeneracy}) is not satisfied, then
$\Omega_g = 0$ and the perturbation theory must be extended to the next order. In this case,
the defective eigenvalue $\lambda_0 = i \Omega_0$ may split along $i \mathbb{R}$
both for $\varepsilon > 0$ and $\varepsilon < 0$.
\end{remark}

\section{Numerical Approximations}
\label{sec-numerics}

We approximate nonlinear modes $\Phi$ of the stationary NLSE~\eqref{NLSstat} and eigenvectors
$(Y,Z)$ of the spectral problem~\eqref{original} with the Chebyshev interpolation method~\cite{trefethen}.
This method was recently applied to massive Dirac equations in~\cite{yusuke}.
Chebyshev polynomials are defined on the interval $[-1,1]$. The stationary NLSE~\eqref{NLSstat} is defined
on the real line, therefore we make a coordinate transformation for the Chebyshev grid points
$\{ z_j = \cos(\frac{j\pi}{N})\}_{j=0}^{j=N}$:
\begin{equation}
x_j = L \arctanh(z_j), \quad j = 1,2,\ldots,N-1,
\label{eq:transform}
\end{equation}
where $x_0 = +\infty$ and $x_N = -\infty$.
The scaling parameter $L$ is chosen so that the grid points $\{ x_j \}_{j =1}^{j = N-1}$ are 
concentrated in the region where the nonlinear mode $\Phi$ changes fast. 
We apply the chain rule for the second derivative:
\begin{gather*}
\frac{d^2 u}{dx^2} = \frac{d}{dx} \left(\frac{du}{dx}\right) =
\frac{d}{dz} \left(\frac{du}{dz} \frac{dz}{dx}\right)=
\frac{d^2 u}{dz^2} \left(\frac{dz}{dx}\right)^2 + \frac{du}{dz} \frac{d^2 z}{dx^2},
\end{gather*}
where
\begin{eqnarray*}
\frac{dz}{dx} = \frac{1}{L} \sech^2 \left(\frac{x}{L}\right) = \frac{1}{L} (1-z^2)
\end{eqnarray*}
and
\begin{eqnarray*}
\frac{d^2 z}{dx^2} = -\frac{2}{L^2} \sech^2\left(\frac{x}{L}\right)
\tanh\left(\frac{x}{L}\right) = -\frac{2}{L^2} z(1-z^2).
\end{eqnarray*}
The first and second derivatives for $\partial_z$ and $\partial^2_z$ are approximated by
the Chebyshev differentiation matrices $D_N$ and $D_N^2$, respectively (see~\cite{trefethen}, p.53).

The stationary NLSE~\eqref{NLSstat} is written in the form:
\begin{equation}
\label{root-finding}
F(\Phi) := (- \partial_x^2 + V + i\gamma W - \mu - g|\Phi|^2) \Phi = 0.
\end{equation}
We fix $\mu$, $\gamma$, $g$, $V(x)$, $W(x)$ and use Newton's method to look for a solution
$\Phi$ satisfying Assumption \ref{assumption-3}:
\begin{equation}
\left[\begin{array}{c} \Phi_{n+1} \\ \bar\Phi_{n+1} \end{array}\right] =
\left[\begin{array}{c} \Phi_n \\ \bar\Phi_n \end{array}\right]
- \mathcal{L}^{-1}_n \left[\begin{array}{c} F(\Phi_n) \\ \bar F(\Phi_n) \end{array}\right],
\label{newton}
\end{equation}
where $\mathcal{L}_n$ is the Jacobian operator to the nonlinear problem (\ref{root-finding}),
which coincides with~\eqref{jacobian} computed at $\Phi_n$.
Since $\Phi(x_0) = \Phi(x_N) = 0$,
the Jacobian operator $\mathcal{L}_n$ is represented by the $2(N-1) \times 2(N-1)$ matrix.

It follows by the gauge transformation that
\begin{equation*}
\mathcal{L} \left[ \begin{array}{c} i\Phi \\ -i\bar\Phi \end{array} \right]
= \left[ \begin{array}{c} 0 \\ 0 \end{array} \right],
\end{equation*}
where $\mathcal{L}$ is given by (\ref{jacobian}). 
Therefore, $\mathcal{L}$ is a singular operator for every parameter choice of equation~\eqref{root-finding}.
However, if the eigenvector satisfies the symmetry $\bar{Z} = Y$, then the eigenvector does not 
satisfy the $\mathcal{PT}$-symmetry:
\begin{equation*}
\mathcal{PT}\left[ \begin{array}{c} i\Phi \\ -i\bar\Phi \end{array} \right]
= \left[ \begin{array}{c} -i\overline{\Phi(-x)} \\ i\Phi(-x) \end{array} \right]
= - \left[ \begin{array}{c} i\Phi \\ -i\bar\Phi \end{array} \right].
\end{equation*}
Hence, $\mathcal{L}$ is invertible on the space of $\cal{PT}$-symmetric functions satisfying (\ref{PT-sym}).
In terms of the coefficients of Chebyshev polynomials, the restriction means that
the even-numbered coefficients are purely real, whereas the odd-numbered
coefficients are purely imaginary.

Choosing a first guess for the iterative procedure~\eqref{newton} depends on the choice of
the potentials $V$ and $W$. For the Scarf II potential~\eqref{pot-Wadati}, one can use a scalar multiple of
the $\sech(x)$ function for the first branch of solutions and a scalar multiple of
the $\sech(x)\tanh(x)$ function for the second branch of solutions~\cite{ahmed}. For the confining potential~\eqref{pot-BEC},
one can use the corresponding Gauss-Hermite functions of the linear system for each branch~\cite{zezyulin}.

The spectral problem~\eqref{original} uses the same operator $\mathcal{L}$ and can
be discretized similarly. One looks for eigenvalues and eigenvectors of the discretized
matrix by using the standard numerical methods for non-Hermitian matrices.
For example, MATLAB$^\text{\textregistered}$ performs these computations by using the QZ algorithm.

Throughout the numerical results, we pick the value of a scaling parameter $L$ to be $L=10$.
This choice ensures that $\Phi$ remains nonzero up to $16$ decimals on
the interior grid points $\{ x_j \}_{j=1}^{j=N-1}$. The algorithm was tested on the exact
solution derived in~\cite{wadati} for the Scarf II potential~\eqref{pot-Wadati} with $V_0=1$ 
and $\mu=\gamma=-1$:
\begin{equation}
\label{exact-sol}
    \Phi_{exact}(x) = \sin \alpha \sech(x) \exp\left( \frac{i}{2} \cos\alpha \arctan(\sinh(x)) \right),
\end{equation}
where $\alpha = \arccos(2/3)$.  Table~\ref{err-table} shows a good agreement between exact and numerical
results.

\begin{table}
    \center
    \begin{tabular}{cc}
        \hline
	& $\|\Phi_{exact} - \Phi_{numerical} \|^2$ \\
        \hline
        N = 50 &  $1.5 \times 10^{-6}$ \\
        \hline
        N = 100 &  $2.4 \times 10^{-13}$ \\
        \hline
        N = 500 & $2.2\times 10^{-13}$ \\
        \hline
\end{tabular}
\caption{The numerical error for the exact solution (\ref{exact-sol})
versus $N$.}
\label{err-table}
\end{table}

Once we computed eigenvalues and eigenvectors for the spectral problem~\eqref{original},
we proceed to computations of the Krein quantity defined by ~\eqref{pt-krein}.
Several obstacles arise in the definition of the Krein quantity:
\begin{enumerate}
    \item Eigenvectors of the Chebyshev discretization matrices are normalized with respect to $z$.
    \item Eigenvectors are not necessarily $\cal{PT}$-symmetric.
    \item The sign of the adjoint eigenvectors relative to the eigenvectors is undefined.
\end{enumerate}
Here we explain how to deal with these technical difficulties.
\begin{enumerate}
\item The eigenvectors are normalized in the $L^2([-1,1])$ norm with respect to the variable $z$.
In order to normalize them in the $L^2(\mathbb{R})$ norm with respect to the variable $x$,
we perform the change of coordinates~\eqref{eq:transform}. In particular, we use integration with
the composite trapezoid method on the grid points $\{ x_j \}_{j=1}^{j=N-1}$ and neglect integrals
for $(-\infty,x_{N-1})$ and $(x_1,+\infty)$.

\item In order to restore the $\mathcal{PT}$-symmetry condition~\eqref{PT-sym-eigenvector},
we multiply the component $Y$ returned from the eigenvector by $e^{i \theta}$ with $\theta \in [0,2\pi]$
and require
\begin{gather*}
    e^{i\theta} Y(x) = e^{-i\theta} \overline{Y(-x)} \quad \Rightarrow \quad
    2i\theta = \log\frac{\overline{Y(-x)}}{Y(x)},
\end{gather*}
where the point $x$ is chosen so that $Y(x)$ and $Y(-x)$ are nonzero.
For example, we compute $\theta$ for all interior grid points $\{ x_j \}_{j=1}^{j=N-1}$
for which $Y(x_j) \neq 0$ and take the average. Both $Y$ and $Z$ in the same eigenvector
are rotated with the same angle $\theta$. Similarly,
this step is performed for $Y^\#$ and $Z^\#$ according to the
$\mathcal{PT}$-symmetry condition~\eqref{PT-sym-eigenvector-adjoint}.

\item We fix the sign of the adjoint eigenvectors at the Hamiltonian case $\gamma = 0$
by using (\ref{Ham-case}). Then we continue the eigenvectors and the adjoint
eigenvectors for simple eigenvalues before coalescence points.
Numerically, we take two steps in $\gamma$: $\gamma_1 < \gamma_2$,
with $|\gamma_2 - \gamma_1| \ll 1$. Suppose that the sign of eigenvector for $\gamma_1$ has been
chosen already. We take eigenvectors for $\gamma_1$ and $\gamma_2$ and compare them.
If eigenvectors have been made $\mathcal{PT}$-symmetric and properly normalized,
then the norm of their difference is either small
(the eigenvectors are almost the same) or close to $2$ (the eigenvectors are negatives of each other). We choose
the sign of the eigenvector so that the norm of their difference is small.
\end{enumerate}

With the refinements described above, we can now compute the Krein quantity $K(\lambda)$ defined by
(\ref{pt-krein}) using the same numerical method as the one used for computing the norms of eigenvectors.

In numerical computations, we have often encountered situations when eigenvalues nearly coalesce,
but the standard MATLAB$^\text{\textregistered}$ numerical routines do not approximate well
the coalescence of eigenvalues. In order to check if the eigenvectors
are linearly dependent near the possible coalescence point, we compute the norm
of the difference between the two eigenvectors (or opposites of each other)
for the two simple eigenvalues and plot it with respect to the parameter $\gamma$.
If the difference between the two eigenvectors vanishes as $\gamma$ is increased towards the
coalescence point, we say that the defective eigenvalue arises at the bifurcation
point. If the difference remains finite,
either we are dealing with the semi-simple eigenvalue at the coalescence point or
the two simple eigenvalues pass each other without coalescence.

\section{Numerical Examples}
\label{sec-examples}

In the numerical examples, we set $N=500$. This gives enough
accuracy for computing eigenvalues, as it was shown in~\cite{yusuke}.
We will demonstrate numerical results on Figures~\ref{fig:yang},\ref{fig:wadati},\ref{fig:kev3}
and~\ref{fig:kev4}. Each figure displays branches of the nonlinear modes $\Phi$
versus a parameter used in the numerical continuations
(either $\mu$ or $\gamma$), where the blue solid line corresponds to stable modes and
the red dashed line denotes unstable ones. The top and middle panels show
the power curves of $\| \Phi \|^2$, a sample profile
of the nonlinear mode $\Phi$, and the spectrum of linearization before and after the instability
bifurcation. The bottom panels show the imaginary part of eigenvalues
$\lambda$ and the Krein quantity of isolated eigenvalues. Green color corresponds to
eigenvalues $\lambda \in i \mathbb{R}$ with the positive Krein
signature, red -- to those with the negative Krein signature, and
black color is used for complex eigenvalues $\lambda \notin i \mathbb{R}$ and
for the continuous spectrum.

\begin{figure}[htbp]
\centerline{
\includegraphics[max size={1.3\textwidth}{0.95\textheight}]{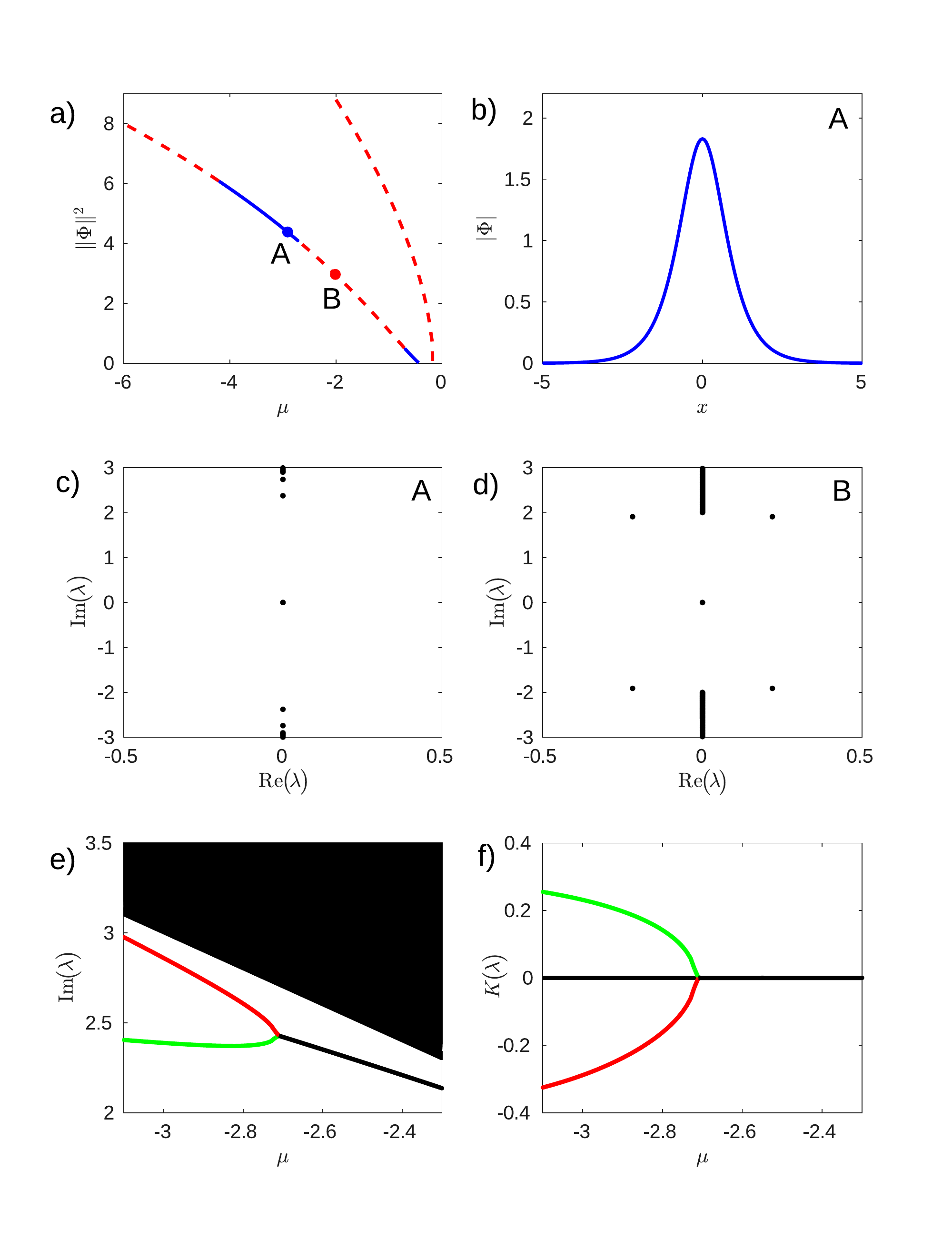}}
\caption{Scarf II potential~\eqref{pot-Wadati} with $V_0 = 2$, $\gamma = -2.21$.
    (a) Power curves versus $\mu$.
    (b) Amplitude profile for point $A$.
    (c) Spectrum of linearization for point $A$.
    (d) Same for point $B$.
    (e) ${\rm Im}(\lambda)$ for the spectrum of linearization versus $\mu$.
    (f) Krein quantities for isolated eigenvalues versus $\mu$. }
\label{fig:yang}
\end{figure}

\begin{figure}[htbp]
\centerline{
\includegraphics[max size={1.3\textwidth}{0.95\textheight}]{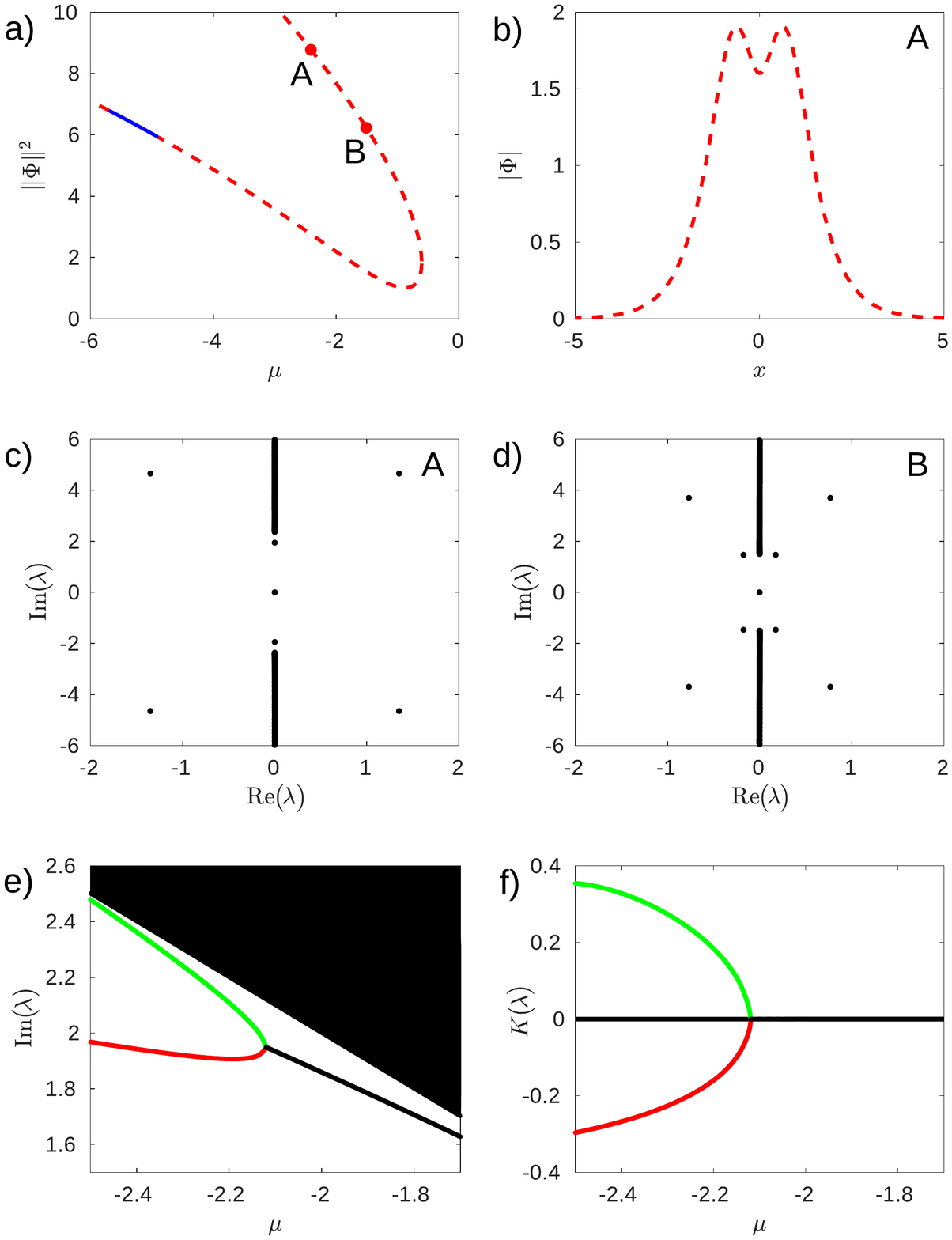}}
\caption{Scarf II potential~\eqref{pot-Wadati} with $V_0 = 3$, $\gamma = -3.7$.
    (a) Power curves versus $\mu$.
    (b) Amplitude profile for point $A$.
    (c) Spectrum of linearization for point $A$.
    (d) Same for point $B$.
    (e) ${\rm Im}(\lambda)$ for the spectrum of linearization versus $\mu$.
    (f) Krein quantities for isolated eigenvalues versus $\mu$. }
\label{fig:wadati}
\end{figure}

Figure~\ref{fig:yang} (a)-(f) shows the instability bifurcation for the Scarf II potential (\ref{pot-Wadati})
studied in~\cite{yang} in the focusing case with $g=1$. Here $V_0 = 2$, $\gamma = -2.21$, and
the first branch of the nonlinear modes $\Phi$ is considered.
As two eigenvalues with different Krein signatures coalesce, they bifurcate into
a complex quadruplet, in agreement with Theorem \ref{theorem-main}.
Note that there's a small region of stability for the nonlinear modes
$\Phi$ of small amplitudes, as it was shown in~\cite{yang}.

Figure~\ref{fig:wadati} (a)-(f) shows the instability bifurcation for the Scarf II
potential~\eqref{pot-Wadati} studied in~\cite{wadati} in the focusing case with $g=1$. 
Here $V_0=3$, $\gamma=-3.7$, and the second branch of the nonlinear modes $\Phi$
is considered. The second branch is unstable with at least one complex quadruplet
for all values of parameter $\mu$ used. The imaginary part of this complex quadruplet 
is not visible on Figure~\ref{fig:wadati} (e) as it coincides with the location 
of the continuous spectrum. In the presence of this complex quadruplet,
we observe a coalescence of two simple eigenvalues $\lambda_1,\lambda_2 \in i \mathbb{R}$
and the instability bifurcation into another complex quadruplet. Numerical evidence confirms
that the eigenvalues have the opposite Krein signatures prior to collision,
allowing us to predict the instability bifurcation, in agreement with Theorem \ref{theorem-main}.

\begin{figure}[htbp]
\centerline{
\includegraphics[max size={1.3\textwidth}{0.95\textheight}]{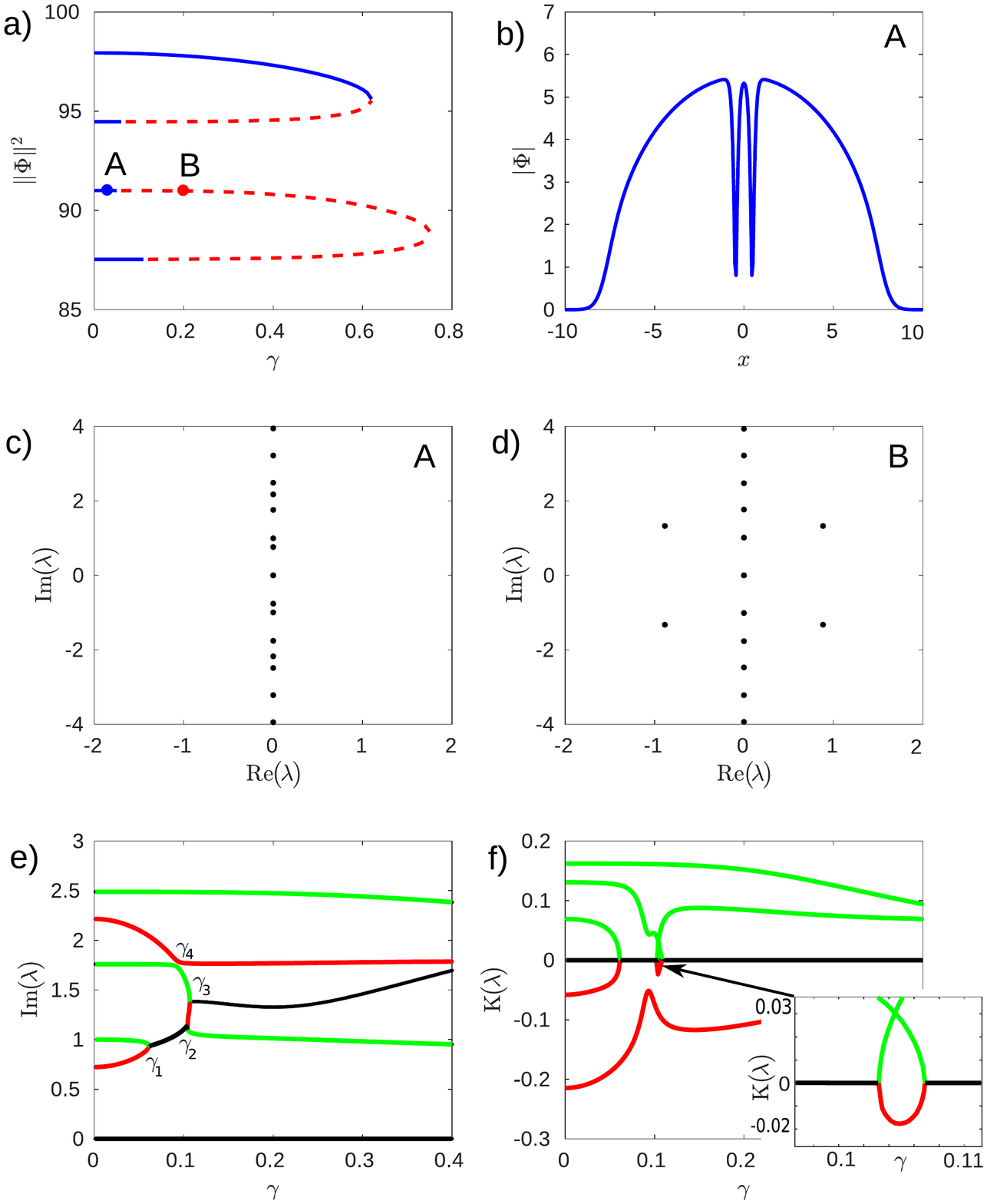}}
\caption{Confining potential~\eqref{pot-BEC}, scaled as in~\eqref{pot-BEC-scaled}.
        (a) Power curves versus $\gamma$.
        (b) Amplitude profile for point $A$.
        (c) Spectrum of linearization for point $A$.
        (d) Same for point $B$.
        (e) ${\rm Im}(\lambda)$ for the spectrum of linearization versus $\gamma$.
        (f) Krein quantities for isolated eigenvalues versus $\gamma$. }
\label{fig:kev3}
\end{figure}
\begin{figure}[htbp]
\centerline{
\includegraphics[max size={1.3\textwidth}{0.95\textheight}]{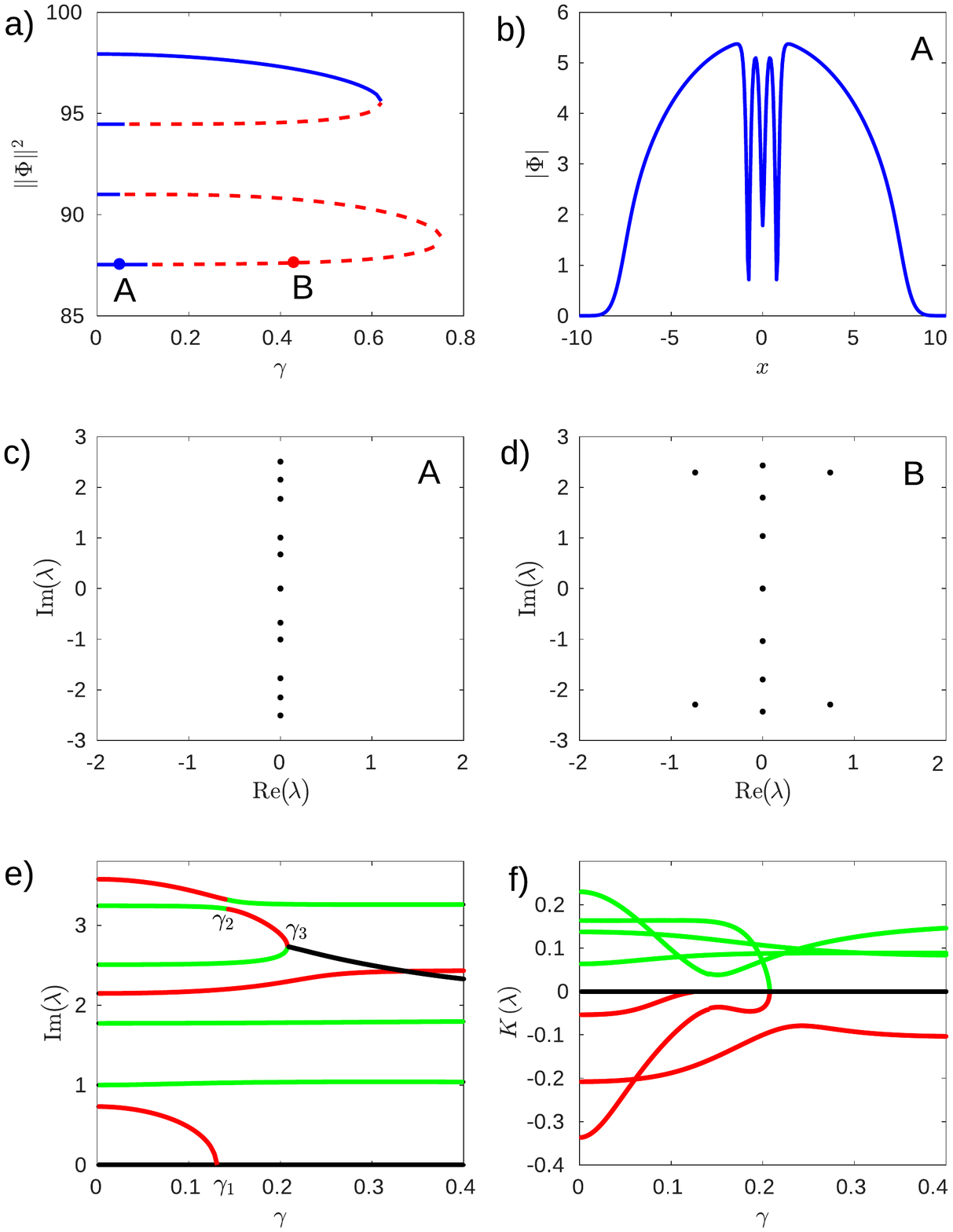}}
\caption{Confining potential~\eqref{pot-BEC}, scaled as in~\eqref{pot-BEC-scaled}.
        (a) Power curves versus $\gamma$.
        (b) Amplitude profile for point $A$.
        (c) Spectrum of linearization for point $A$.
        (d) Same for point $B$.
        (e) ${\rm Im}(\lambda)$ for the spectrum of linearization versus $\gamma$.
        (f) Krein quantities for isolated eigenvalues versus $\gamma$. }
\label{fig:kev4}
\end{figure}

Figures~\ref{fig:kev3},\ref{fig:kev4} (a)-(f) show the confining potential~\eqref{pot-BEC} studied in~\cite{kevrekidis},
in the defocusing case with $g=-2$. Compared to~\eqref{pot-BEC}, we use a scaled version of this
potential to match the one in~\cite{kevrekidis}:
\begin{equation}
    \label{pot-BEC-scaled}
V(x) = x^2, \quad W(x) = 2\Omega^{-3/2}xe^{-\frac{x^2}{2\Omega}},
\end{equation}
where $\Omega=10^{-1}$ is a scaling parameter.
There are four branches of the nonlinear modes $\Phi$ shown,
out of which we highlight only the third and fourth branches.
The first branch is stable, whereas the second branch becomes unstable because of a coalescence
of a pair of eigenvalues $\pm \lambda \in i \mathbb{R}$ with the negative Krein signature
at the origin \cite{kevrekidis}. The third and fourth branches are studied in Figures~\ref{fig:kev3} and~\ref{fig:kev4}.

In Figure~\ref{fig:kev3} we can see that there are three bifurcations occurring at $\gamma_1\approx 0.07$,
$\gamma_2 \approx 0.1031$ and $\gamma_3\approx 0.1069$. For each bifurcation two eigenvalues with different Krein
signatures collide and bifurcate off to the complex plane in accordance with Theorem \ref{theorem-main}.
In addition, two simple eigenvalues with
different Krein signatures nearly coalesce  near $\gamma_4 \approx 0.1$.
Figure~\ref{fig:comp} (a) shows the norm
of the difference between the two eigenvectors and two adjoint eigenvectors for the two simple eigenvalues
while $\gamma$ is increased towards $\gamma_4$. As the difference does not vanish,
we rule out this point as the bifurcation point for the defective eigenvalue. Consequently,
the eigenvalues are continued past this point with preservation of their Krein signatures.

In Figure~\ref{fig:kev4} we can see three bifurcations occurring at $\gamma_1\approx 0.1303$,
$\gamma_2\approx 0.1427$, and $\gamma_3\approx 0.2078$. At $\gamma_1$,
an eigenvalue pair with negative Krein signature coalesce at zero and become a pair of
real (unstable) eigenvalues. As $\gamma$ is increased towards $\gamma_2$,
two eigenvalues with opposite Krein signature move towards each other.
Figure~\ref{fig:comp} (b) illustrates that the norm of the difference between the two eigenvectors
and the two adjoint eigenvectors vanishes at the coalescence point.
Therefore, we conclude that at $\gamma_2$ we have a defective eigenvalue
which does not split into a complex quadruplet. According to Theorem \ref{theorem-main},
the defective eigenvalue does not split into complex unstable eigenvalues only if
 the non-degeneracy condition~\eqref{non-degeneracy} is not satisfied. Similar safe passing
of eigenvalues of opposite Krein signature through each other is observed in~\cite{yang}.
The behavior near $\gamma_2$ shows that having opposite Krein signatures
prior to coalescence of two simple eigenvalues into a defective eigenvalue
is a \emph{necessary but not sufficient} condition for the instability bifurcation.
At $\gamma_3$, two eigenvalues with opposite Krein signatures coalesce and
bifurcate into a complex quadruplet according to Theorem \ref{theorem-main}.

\begin{figure}[t]
\centerline{
\includegraphics[width=0.7\textwidth]{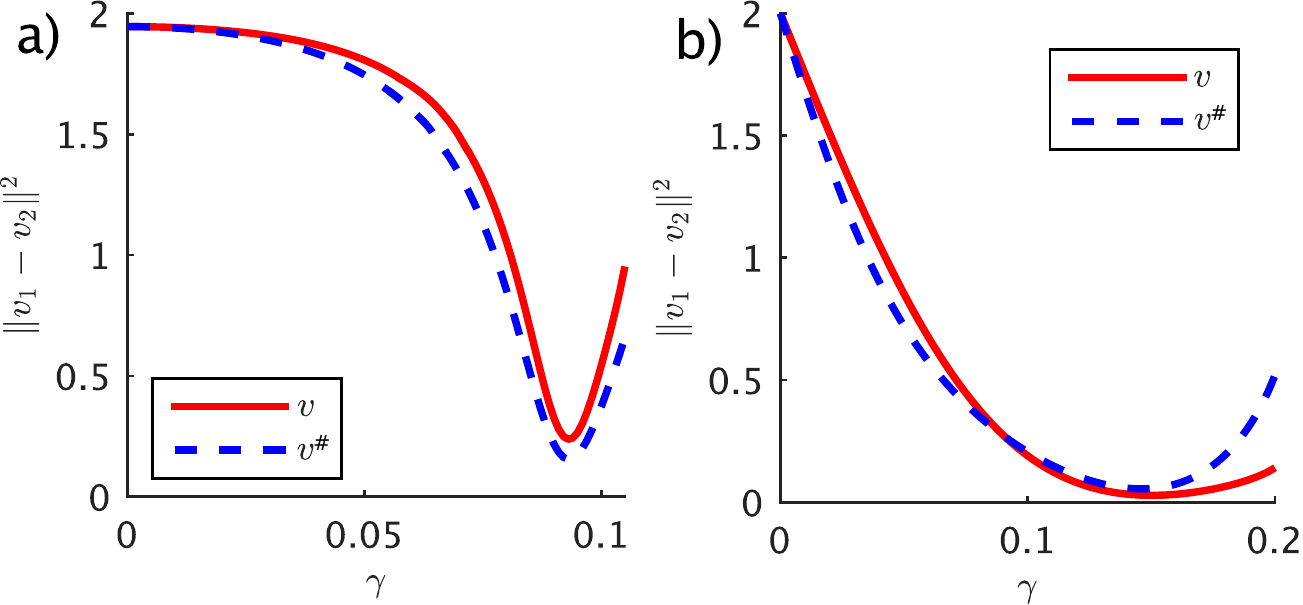}}
\caption{The norm of the difference between the two eigenvectors and the two adjoint
    eigenvectors prior to a possible coalescence point: (a) for Figure~\ref{fig:kev3}
(b) for Figure~\ref{fig:kev4}.}
\label{fig:comp}
\end{figure}

\section{Discussion}
\label{sec-conclusion}

In this work, we introduced the Krein quantity for simple isolated eigenvalues
in the linearization of the nonlinear modes in the $\mathcal{PT}$-symmetric NLS equation.
We proved that the Krein quantity is zero for complex eigenvalues and nonzero for simple
purely imaginary eigenvalues. When two simple eigenvalues coalesce
on the imaginary axis in a defective eigenvalue,
the Krein quantity vanishes and we proved
under the non-degeneracy assumption
that this bifurcation point produces
complex unstable eigenvalues on one side of the bifurcation point.
This result shows that the main feature of the instability bifurcation
in Hamiltonian systems is extended to the $\mathcal{PT}$-symmetric NLS equation.

There are nevertheless limitations of this theory in the $\mathcal{PT}$-symmetric systems.
First, the adjoint eigenvectors are no longer related to the eigenvectors of the spectral
problem, which opens up a problem of normalizing the adjoint eigenvector relative to the
eigenvector. We fixed the sign of the adjoint eigenvector in the Hamiltonian limit
and continue the sign off the Hamiltonian limit by using continuity of eigenvectors
along the parameters of the model.

Second, if the bifurcation point corresponds to a semi-simple eigenvalue, then the bifurcation theory
does not lead to the same conclusion as in the Hamiltonian case. 
The first-order perturbation theory results in the non-Hermitian matrices, hence 
it is not clear how to conclude on the splitting
of the semi-simple eigenvalues on each side of the bifurcation point.

Finally, coalescence of the simple purely imaginary eigenvalues at the origin and the related instability
bifurcations are observed frequently in the $\mathcal{PT}$-symmetric systems and they are not
predicted from the Krein quantity. Therefore, we conclude that the stability theory
of Hamiltonian systems cannot be fully extended to the $\mathcal{PT}$-symmetric NLS equation,
only the necessary condition for the instability bifurcation can be, as is shown in this work.


\end{document}